\algnewcommand\algorithmicinput{\textbf{Input:}}
\algnewcommand\algorithmicoutput{\textbf{Output:}}
\algnewcommand\Input{\item[\algorithmicinput]}
\algnewcommand\Output{\item[\algorithmicoutput]}
\pgfplotsset{compat=1.14}
\tikzstyle{alter}=[circle, minimum size=16pt, draw, inner sep=1pt,fill=white,semithick] 
\tikzstyle{alterlarge}=[circle, minimum size=20pt, draw, inner sep=1pt,fill=white,semithick] 
\tikzstyle{alterinvis}=[circle, minimum size=16pt, draw, inner sep=1pt,draw=white,semithick] 
\tikzstyle{majarr}=[draw=black,semithick]
\tikzstyle{majarrinvis}=[draw=white,semithick]
\pgfplotsset{
    discard if/.style 2 args={
        x filter/.append code={
            \edef\tempa{\thisrow{#1}}
            \edef\tempb{#2}
            \ifx\tempa\tempb
                
            \else
            \fi
        }
    }
}
\newtheorem{observation}{Observation}
\newtheorem{rrule}{Reduction Rule}
\newtheorem{brule}{Branching Rule}
\newtheorem{lbound}{Lower Bound}
\crefname{rrule}{Reduction Rule}{Reduction Rules} %
\Crefname{rrule}{RR}{RRs} %
\crefname{brule}{Branching Rule}{Branching Rules} 
\Crefname{brule}{BR}{BRs} 
\crefname{lbound}{Lower Bound}{Lower Bounds}
\Crefname{lbound}{LB}{LBs}
\newcommand{\twoclubvertexdelete}{\textsc{2-Club Cluster Vertex Deletion}}
\newcommand{\cstrtwoclubvertexdeletelong}{\textsc{Generalized 2-Club Cluster Vertex Deletion}}
\newcommand{\cstrtwoclubvertexdelete}{\textsc{Gen2CVD}}
\newcommand{\twoclubedit}{\textsc{2-Club Cluster Editing}}
\newcommand{\clustervertexdelete}{\textsc{Cluster Vertex Deletion}}
\newcommand{\weightedtwoclubvertexdelete}{\textsc{Generalized 2-Club Cluster Vertex Deletion}}
\newcommand{\weightedtwoclubvertexdeleteshort}{\textsc{Gen2CVD}}
\newcommand{\dominatingset}{\textsc{Dominating Set}}
\newcommand{\bigO}{\mathcal{O}} %
\newcommand{\NN}{\mathbb{N}} %
\newcommand{\R}{\mathcal{R}}
\newcommand{\I}{\mathcal{I}}
\DeclareMathOperator{\dist}{dist}
\DeclareMathOperator{\robust}{robust}
\newcommand{\poly}{\ensuremath{\operatorname{poly}}}
\newcommand{\NP}{\ensuremath{\operatorname{NP}}}
\newcommand{\coNP}{\ensuremath{\operatorname{coNP}}}
\newcommand{\appref}[1]{{\hyperref[proof:#1]{\appsymb}}}
\newcommand{\problemdef}[3]{
  \begin{center}
    \begin{minipage}{0.95\columnwidth}
      \noindent
      \textsc{#1}
      
      \vspace{2pt}
      \setlength{\tabcolsep}{3pt}
      \begin{tabularx}{\columnwidth}{@{}lX@{}}
        \normalsize\textbf{Input:} 		& \normalsize #2 \\
        \normalsize\textbf{Question:} 		& \normalsize #3
      \end{tabularx}
    \end{minipage}
  \end{center}
}
\title{On 2-Clubs in Graph-Based Data Clustering: Theory and Algorithm Engineering}
\newcommand{\tuaddress}{Algorithmics and Computational Complexity, Faculty IV, TU Berlin, Germany}
\author{Aleksander Figiel}{\tuaddress}{aleksander.figiel@campus.tu-berlin.de}{}{Partially supported by DFG NI 369/18.}
\author{Anne-Sophie Himmel}{\tuaddress}{anne-sophie.himmel@tu-berlin.de}{https://orcid.org/0000-0001-7905-7904}{Supported by DFG project NI 369/16.}
\author{André~Nichterlein}{\tuaddress}{andre.nichterlein@tu-berlin.de}{https://orcid.org/0000-0001-7451-9401}{}
\author{Rolf Niedermeier}{\tuaddress}{rolf.niedermeier@tu-berlin.de}{https://orcid.org/0000-0003-1703-1236}{}
\authorrunning{A.~Figiel, A-S.~Himmel, A.~Nichterlein, R.~Niedermeier}
\keywords{Graph modification problems, parameterized complexity, fixed-parameter tractability, problem kernel, data reduction, branch\&bound, algorithm engineering} 
\begin{document}

\maketitle

\begin{abstract}
Editing a graph into a disjoint union of clusters is a standard optimization task in graph-based data clustering.
Here, complementing classic work where the clusters shall be cliques, we focus on clusters that shall be 2-clubs, that is, subgraphs of diameter two. This naturally leads to the two NP-hard problems \textsc{2-Club Cluster Editing} (the allowed editing operations are edge insertion and edge deletion) and \textsc{2-Club Cluster Vertex Deletion} (the allowed editing operations are vertex deletions).

Answering an open question from the literature, we show that \textsc{2-Club Cluster Editing} is W[2]-hard with respect to the number of edge modifications, thus contrasting the fixed-parameter tractability result for the classic \textsc{Cluster Editing} problem (considering cliques instead of 2-clubs).
Then focusing on \textsc{2-Club Cluster Vertex Deletion}, which is easily seen to be fixed-parameter tractable, we show that under standard complexity-theoretic assumptions it does not have a polynomial-size problem kernel when parameterized by the number of vertex deletions. 
Nevertheless, we develop several effective data reduction and pruning rules, resulting in a competitive solver, clearly outperforming a standard CPLEX solver in most instances of an established biological test data set.

\end{abstract}

\section{Introduction}
Graph-based data clustering is one of the most important application 
domains for graph modification problems~\cite{SST04}. Roughly speaking,
the goal herein is to transform a given graph into (usually) 
disjoint clusters,
thereby performing as few modification operations 
(edge deletions, edge insertions, 
vertex deletions) as possible. 
This type of problems typically is NP-hard.
The perhaps most prominent problem 
herein is \textsc{Cluster Editing} (also known 
as \textsc{Correlation Clustering}), where the clusters are requested 
to be cliques and one is allowed to perform both edge insertions and 
edge deletions. There has been a lot of work on 
\textsc{Cluster Editing}, e.g., see the surveys by B\"ocker and Baumbach~\cite{BB13} and by Crespelle et al.~\cite{CDFG20}. However, also 
the variant where one modifies the input graph by vertex deletions 
received significant interest~\cite{Bor+16,DK12,Huf+10,Tsu19}. 

Arguably, for many data science applications the request that 
the clusters have to be cliques is too rigid. 
Hence, the consideration of clique relaxations for defining clusters 
gained attention in graph-based data 
clustering~\cite{BMN12,Guo+10,LZZ12,MPS13}.
In this work, we focus on so-called \emph{2-clubs} as 
clusters~\cite{LZZ12,MPS13}: 
these are diameter-at-most-two graphs (hence, cliques are 1-clubs). 
Other than finding cliques, finding 2-clubs of size at least~$k$ is fixed-parameter
tractable with respect to~$k$~\cite{SKMN12,HKN15a}.
Note that 2-clubs already have been used in the context of
biological data analysis~\cite{JGG+18,Pas08}.
Moreover, 2-clubs have been studied in the context of covering 
vertices in a graph~\cite{DL19,DMSZ19,DMZ19}.

Now, continuing and complementing previous work of Liu et al.~\cite{LZZ12},
we study both the edge editing variant (referred to as \textsc{2-Club Cluster Editing}) and the vertex deletion variant (referred to as \textsc{2-Club Cluster Vertex Deletion}). 
We contribute
the following three main results:
\begin{enumerate}
\item Answering an open question of Liu 
et al.~\cite{LZZ12}, in \cref{sec:2ccedit}
we show that \textsc{2-Club Cluster Editing} is W[2]-hard with 
respect to the number of modified edges (deletions and insertions),
hence most likely not fixed-parameter tractable. 
This stands in sharp contrast to the problems 
\textsc{Cluster Editing}~\cite{GGHN05} and the more general \textsc{$s$-Plex Cluster Editing}~\cite{Guo+10}\footnote{This is the generalization of \textsc{Cluster Editing} 
where clusters are requested to be $s$-plexes (and not cliques); an $s$-plex is
a subgraph where each vertex is connected to all other vertices of the 
$s$-plex except for at most $s-1$ vertices. Notably, a clique is a 1-plex.},
both known to be fixed-parameter tractable for the parameter number of edge modifications.
The W[2]-hardness seems surprising considering the fact that while 
\textsc{Cluster Editing} is fixed-parameter tractable~\cite{BB13} and 
\textsc{2-Club Cluster Editing} is presumably not, by way of contrast 
finding cliques is presumably not fixed-parameter tractable
while finding 2-clubs is. 
\item Complementing fixed-parameter tractability and kernelization
results for \textsc{Cluster Vertex Deletion}~\cite{Bor+16,Huf+10,Tsu19} and
\textsc{$s$-Plex Cluster Vertex Deletion}~\cite{BMN12}, 
in \cref{sec:no-poly-kernel} we show that, other than 
these related problems and despite being easily seen to be 
fixed-parameter tractable for the parameter solution size, 
\textsc{2-Club Cluster Vertex Deletion} is 
unlikely to have a polynomial-size problem kernel.\footnote{It has been featured as an open problem whether the edge deletion variant \textsc{$s$-Club Cluster Edge Deletion}
has a polynomial-size problem kernel~\cite{CDFG20,LZZ12}.}
\item In \cref{sec:algorithms,sec:implementation,sec:impl-exp}, 
we explore the fixed-parameter tractability of \textsc{2-Club Vertex Deletion} from a more practical angle and develop several efficient 
data reduction rules together with effective search-tree pruning rules. 
Performing an empirical evaluation with standard biological data,
we show that our tuned algorithmic approach 
(based on branching and data reduction) in most relevant cases 
clearly outperforms a CPLEX-based solver, thus providing a state-of the art software tool for the vertex deletion variant of graph-based data clustering with 2-clubs.
\end{enumerate}

\subsection{Preliminaries}
All graphs considered in our work are undirected and simple.
For a graph~$G = (V, E)$ we set~$n := |V|$ and~$m:=|E|$.
We denote with~$\binom{V}{2}$ the set of all two-element subsets of~$V$.

For a vertex~$v\in V$, we denote by~$N_G(v):=\{ w \in V \mid \left\{ v, w \right\} \in E \}$ the \emph{open neighborhood} of~$v$ and by~$N_G[v]:=N_G(v) \cup \left\{ v \right\}$ the \emph{closed neighborhood} of~$v$. 
The \emph{degree} of~$v$ is~$\deg_G(v):= |N_G(v)|$. 
For a vertex subset~$V' \subseteq V$, let~$N_G[V'] := \bigcup_{v \in V'} N_G[v]$.
If it is clear from the context, then we omit~$G$ from the subscripts. 
We denote by~$G[V']$ the subgraph of~$G$ induced by the vertex set $V' \subseteq V$ and by $G[E']$ the subgraph of~$G$ induced by the edge set $E' \subseteq E$, that is, $G[E']:=(V, E')$. 
The graph~$G-v$ is obtained by deleting~$v \in V$ from~$G$, that is~$G-v := G[V \setminus \{v\}]$.

A path~$P$ in~$G$ is an ordered sequence of pairwise distinct vertices~$v_1, v_2, \dots, v_{k+1} \in V$ such that~$\{ v_i, v_{i+1} \} \in E$ for all~$i \in \{1, \dots, k \}$. 
It is also an \emph{induced path} if these are the only edges between its vertices. 
The length of~$P$ is~$k$. 
We will call a path on~$n$ vertices a~$P_n$.
The \emph{distance} of two vertices~$s, t \in V$, denoted by~$\dist_G(s, t)$, is the length of a shortest path connecting~$s$ and~$t$ if one exists, and~$\infty$ otherwise.
The \emph{diameter} of a graph is the maximum distance of any two vertices, formally~$\max_{s, t \in V} \dist_G(s, t)$.
A graph is said to be \emph{connected} if there exists a path between all pairs of its vertices.
A (connected) \emph{component} of a graph~$G$ is a maximal vertex set~$S \subseteq V$ such that~$G[S]$ is connected. 

\subparagraph*{$s$-Club.}
An~\emph{$s$-club} is a graph of diameter at most~$s$. 
A \emph{clique} is a $1$-club.
Furthermore, an \emph{$s$-club cluster graph} is a graph in which each component is an~$s$-club.
In this paper, we consider the following two problems, where $E \triangle F :=(E \setminus F) \cup (F \setminus E)$ denotes the \emph{symmetric difference} of two sets.
\problemdef{\textsc{$s$-Club Cluster Editing}}
	{An undirected graph~$G=(V,E)$ and an integer~$k \in \NN$.}
	{Is there an edge set~$F \subseteq \binom{V}{2}$ with~$|F| \leq k$ such that~$G[E \triangle F]$ is an~$s$-club cluster graph?}
\problemdef{\textsc{$s$-Club Cluster Vertex Deletion}}
	{An undirected graph~$G=(V,E)$ and an integer~$k \in \NN$.}
	{Is there a vertex subset~$S \subseteq V$ with~$|S| \leq k$ such that~$G[V \setminus S]$ is an~$s$-club cluster graph?}
An edge set~$F\subseteq \binom{V}{2}$ such that $G[E \triangle F]$ is an $s$-club cluster graph is called an \emph{$s$-club editing set} and a vertex set~$S\subseteq V$ such that $G[V \setminus S]$ is an $s$-club cluster graph is called an \emph{$s$-club vertex deletion set}.

\subparagraph*{2-Club.}\label{sect:2clubspecifics}
A 2-club is a graph with diameter at most two.
This means that for all pairs of vertices~$u, v \in V$ it holds that~$u$ and~$v$ are adjacent or have at least one common neighbor. 
Note that 2-clubs are \emph{non-hereditary}, that is, if~$G$ is a 2-club, then deleting vertices from~$G$ may destroy this property.
This is a significant difference in comparison with cliques.

Using terminology of \citet{LZZ12}, we call a path~$stuv$ in~$G$ a \emph{restricted~$P_4$} if~$\dist_G(s, v) = 3$. 
That is, a restricted~$P_4$ is a shortest path connecting~$s$ and~$v$ and is thus also an induced~$P_4$. 
The following characterization is easy to verify:
\begin{observation}[{\cite[Lemma 3]{LZZ12}}]
	\label[observation]{obs:charact-2cc-graph}
	A graph~$G$ is a 2-club cluster graph if and only if it contains no restricted~$P_4$.
\end{observation}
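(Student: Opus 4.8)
The plan is to prove both directions of the equivalence by contraposition, reducing each to a statement about the distance between a single pair of vertices. The two notions in play are: ``2-club cluster graph'' meaning every connected component has diameter at most two, and ``restricted $P_4$'' meaning a path $stuv$ with $\dist_G(s,v) = 3$. The whole argument hinges on relating a component of diameter $\ge 3$ to the existence of such a length-three shortest path.

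For the forward direction I would assume $G$ is a 2-club cluster graph and show it has no restricted $P_4$. Suppose toward a contradiction that $stuv$ is a restricted $P_4$. Then by definition $\dist_G(s,v) = 3$, so in particular $s$ and $v$ lie in the same connected component $C$. Since $G$ is a 2-club cluster graph, $G[C]$ is a 2-club and hence has diameter at most two, forcing $\dist_G(s,v) \le 2$ and contradicting $\dist_G(s,v) = 3$.

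For the backward direction I would assume $G$ has no restricted $P_4$ and show every component is a 2-club. Arguing by contraposition, suppose some component $C$ has diameter at least three, so there exist $s, v \in C$ with $\dist_G(s,v) \ge 3$. The one step requiring a little care is that this does not immediately hand us a restricted $P_4$, since the latter demands distance \emph{exactly} three. To repair this, fix a shortest $s$--$v$ path $p_0 p_1 \cdots p_d$ with $p_0 = s$, $p_d = v$, and $d \ge 3$, and consider its length-three prefix $p_0 p_1 p_2 p_3$. I claim $\dist_G(p_0, p_3) = 3$: any connection between $p_0$ and $p_3$ of length at most two could be spliced with the suffix $p_3 \cdots p_d$ to produce an $s$--$v$ walk of length less than $d$, contradicting $d = \dist_G(s,v)$. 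Hence $p_0 p_1 p_2 p_3$ is a restricted $P_4$, contradicting the assumption.

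The argument is elementary, and the only substantive observation is that ``diameter at least three'' must be upgraded to ``a shortest path of length exactly three'' before a restricted $P_4$ appears; truncating a shortest path to its first three edges accomplishes exactly this, because every subpath of a shortest path is again shortest and therefore induced. Combining the two directions yields the stated equivalence.
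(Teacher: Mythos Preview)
Your proof is correct. Note that the paper does not actually supply a proof of this observation: it is attributed to \cite[Lemma~3]{LZZ12} and introduced only as ``easy to verify,'' so there is no in-paper argument to compare against. Your two-direction contrapositive argument, with the key step of truncating a shortest path of length $\ge 3$ to its first three edges to extract a restricted $P_4$, is exactly the elementary verification one would expect.
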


\subsection{Parameterized Algorithmics}
A parameterized problem~$\Pi \subseteq \Sigma^* \times \Sigma^*$ is a set of pairs~$(I, k)$, where~$I$ denotes the problem instance and~$k$ is the parameter. 
Problem~$\Pi$ is \emph{fixed-parameter tractable} (FPT) if there exists an algorithm solving any instance of~$\Pi$ in~$f(k) \cdot |I|^c$ time, where~$f$ is some computable function and~$c$ is some constant. 
A \emph{parameterized reduction} from a parameterized problem~$\Pi \subseteq \Sigma^* \times \Sigma^*$ to a parameterized problem~$\Pi' \subseteq \Sigma^* \times \Sigma^*$ is a function which maps any instance~$(I,k) \in \Sigma^* \times \Sigma^*$ to another instance~$(I',k')\in \Sigma^* \times \Sigma^*$ such that
(1) $(I',k')$ can be computed from~$(I,k)$ in FPT time.%
(2) $k' \leq g(k)$ for some computable function~$g$, and 
(3) $(I,k) \in \Pi \iff (I',k') \in \Pi'$. 
If~$\Pi$ is W[$i$]-hard, $i \ge 1$, then such a parameterized reduction shows that also~$\Pi'$ is W[$i$]-hard, that is, presumably not fixed-parameter tractable.
A \emph{reduction to a problem kernel} is a parameterized self-reduction (from~$\Pi$ to~$\Pi$) such that~$(I', k')$ can be computed in polynomial time and~$|I'| \le g(k)$.
If~$g$ is a polynomial, then~$(I', k')$ is called a \emph{polynomial kernel}.
Problem kernels are usually achieved by applying \emph{data reduction rules}.
Given an instance~$(I, k)$, a data reduction rule computes in polynomial time a new instance~$(I', k')$.
We call a data reduction rule \emph{safe} if $(I, k) \in \Pi \iff (I', k') \in \Pi$.

\subsection{Organization of the paper}
We prove in \cref{sec:2ccedit} the W[2]-hardness of \twoclubedit{}.
In \cref{sec:no-poly-kernel}, we show that \twoclubvertexdelete{} does not admit a polynomial kernel with respect to~$k$ unless $\NP\subseteq \coNP/\poly$.
We then present in \cref{sec:algorithms} an ILP-formulation and a branch\&bound algorithm for \twoclubvertexdelete{}.
In \cref{sec:implementation}, we provide implementation details for the solver which we experimentally evaluate in \cref{sec:impl-exp}.
We conclude in \cref{sec:conclusion}.

\section{W[2]-Hardness of 2-Club Cluster Editing}
\label{sec:2ccedit}
It is easy to see that \twoclubvertexdelete{} is fixed-parameter tractable with respect to solution size~$k$ \cite{LZZ12}:
By \cref{obs:charact-2cc-graph}, it is enough to recursively search for a restricted~$P_4$~$stuv$ and delete a vertex to separate~$s$ and~$v$. 
In contrast, we subsequently show that \twoclubedit{} is W[2]-hard with respect to solution size~$k$ answering an open question of \citet{LZZ12}.
Intuitively, the hardness is due to the fact that there is a ``non-local'' way of destroying a restricted~$P_4$ with edge insertions, see \cref{fig:p4_resolutions_example} for an illustration.

\newcommand{\tworows}[2]{\begin{tabular}{l}{#1}\\{#2}\end{tabular}}

\begin{figure}[t]
	\centering
	\begin{tikzpicture}
		\node at (-2,0.5) {\tworows{six local}{modifications:}};
	
		\node[alter] at (0,0) (s) {$s$};
		\node[alter, right = 4ex of s] (t) {$t$};
		\node[alter, right = 4ex of t] (u) {$u$};
		\node[alter, right = 4ex of u] (v) {$v$};

		\draw[majarr, densely dotted] (s) edge (t);
		\draw[majarr, densely dotted] (t) edge (u);
		\draw[majarr, densely dotted] (u) edge (v);

		\draw[majarr, dashed] (s) to[out=45,in=135,looseness=1.0] (u);
		\draw[majarr, dashed] (v) to[out=135,in=45,looseness=1.0] (t);
		\draw[majarr, dashed] (s) to[out=50,in=130,looseness=1.3] (v);
	
		\begin{scope}[yshift = -2cm]
			\node at (-2,0.5) {\tworows{non-local}{modification:}};
			\node[alter] at (0,0) (s) {$s$};
			\node[alter, right = 4ex of s] (t) {$t$};
			\node[alter, right = 4ex of t] (u) {$u$};
			\node[alter, right = 4ex of u] (v) {$v$};

			\draw[majarr] (s) edge (t);
			\draw[majarr] (t) edge (u);
			\draw[majarr] (u) edge (v);

			\coordinate (Middle) at ($(s)!0.5!(v)$);

			\node[alter, above = 4ex of Middle] (b) {$b$};
			\draw[majarr, dashed] (b) edge (s);
			\draw[majarr, dashed] (b) edge (v);
		\end{scope}

		\begin{scope}[xshift = 6.5cm, yshift = -0.6cm]
			\def\n{5}
			\foreach \x [count = \i] in {a,...,e} {
					\node[alter] (A-\i) at ({360 * (\i-1) / \n + 90}:12ex) {$\x$};
			}
			\foreach \x [count = \i] in {f,...,j} {
					\node[alter] (B-\i) at ({360 * (\i-1) / \n + 90}:6ex) {$\x$};
			}

			\foreach \i [evaluate={\j=int(mod(\i,5)+1)}] in {1,2,4,5} {
				\draw[majarr] (A-\i) edge (A-\j);
			}
			\draw[majarr] (A-3) edge (A-4) [dashed];

			\foreach \i [evaluate={\j=int(mod(\i+2+4,5)+1)}]
			in {1,2,3,4,5}{
				\draw[majarr] (A-\i) edge (B-\i);
				\draw[majarr] (B-\j) edge (B-\i);
			}
		\end{scope}
	\end{tikzpicture}
	\caption{
		\emph{Left (top and bottom):} All possible modifications to destroy a restricted~$P_4$. 
		Top: The six ``local'' modifications; that is, any edge which is inserted (dashed edges) or deleted (dotted edges) has both its ends in the~$P_4$. %
		Bottom: A ``non-local'' modification (the two inserted edges are dashed), where~$b$ can be any vertex other than~$s, t, u$ and~$v$.
		\emph{Right side:}
		The dashed edge indicates the single optimal solution (inserting the edge, the resulting Petersen graph has diameter two) which is a non-local modification. 
		Note that the distance of~$c$ and~$d$ before was four. 
		Hence, inserting the edge~$\left\{ c,d \right\}$ is not part of any local modification. 
	}
	\label{fig:p4_resolutions_example}
\end{figure}
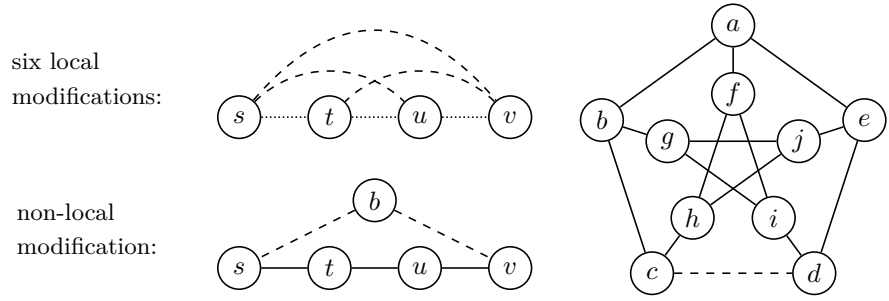

The basic idea of our parameterized reduction from \dominatingset{}\footnote{Given an undirected graph~$G = (V,E)$ and an integer~$k$, the question is whether there is a dominating set~$V' \subseteq V$ (that is, $N[V'] = V$) of size at most~$k$.} is inspired by a parameterized reduction by~\citet[Theorem 1]{GHN13} who showed hardness for the problem of reducing the diameter of a given graph to two by inserting at most~$k$ edges.
In our reduction we need to take care of the possibility to delete edges, which changes many details of the construction.
\dominatingset{} remains W[2]-hard with respect to~$k$ for graphs with diameter two \cite{Lok+13}, which allows us to assume that the \dominatingset{} instance has diameter two.

\begin{theorem} \label{thm:2clubw2hard}
\twoclubedit{} is W[2]-hard with respect to~$k$.
\end{theorem}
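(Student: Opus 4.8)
The plan is to give a parameterized reduction from \dominatingset{} restricted to diameter-two graphs, which is W[2]-hard with respect to the solution size~$k$ (as cited above). The guiding principle, adapted from~\citet[Theorem~1]{GHN13}, is that if one adds a fresh ``source'' vertex~$z$ to a diameter-two graph~$H$ and forces~$z$ to lie at distance exactly three from every original vertex, then bringing~$z$ within distance two of all of~$V_H$ amounts exactly to making the neighborhood of~$z$ a dominating set of~$H$: inserting an edge~$\{z, v_j\}$ drops~$\dist(z, v_i)$ to two for every~$v_i \in N_H[v_j]$, and since~$H$ has diameter two no further edits among the~$v_i$ are needed. By \cref{obs:charact-2cc-graph}, the target graph must be free of restricted~$P_4$s, so the construction arranges that~$z$ together with each~$v_i$ spans a restricted~$P_4$, and the intended, cheapest way to destroy all of them simultaneously is precisely the non-local resolution of \cref{fig:p4_resolutions_example}, realised by connecting~$z$ to a dominating set. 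I would set the budget~$k'$ to~$k$ plus a fixed offset accounting for the gadget.

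The first technical point is to place the source at distance three from every~$v_i$ \emph{without} creating a vertex adjacent to all of~$V_H$: such a ``hub'' would let a single insertion repair every restricted~$P_4$ at once, collapsing the reduction. I would therefore connect~$z$ to~$V_H$ only through a gadget that keeps all~$\dist(z, v_i)$ equal to three while routing the short repairs through~$H$'s own adjacency, so that no single inserted edge can cover more original vertices than the closed neighborhood of one vertex of~$H$.

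The second, and genuinely new, difficulty compared with the insertion-only setting of~\citet{GHN13} is that \twoclubedit{} also permits edge \emph{deletions} and may break the graph into several 2-clubs. Both must be made useless or prohibitively expensive, since otherwise one could cheaply sever the source or split off a few vertices instead of dominating. The plan is to blow up the sensitive parts of the construction---replacing key vertices by large cliques and/or taking many parallel copies of the source gadget---so that any solution relying on deletions or on separating the~$V_H$-cluster into more than one component must pay more than~$k'$ edits, while the domination-based insertions stay within budget. This is exactly the place where ``many details of the construction'' change relative to the diameter-augmentation reduction.

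For correctness, the forward direction is routine: a dominating set~$D$ of size at most~$k$ yields an editing set of the same order by inserting the edges linking the source to~$D$, which by the distance argument removes every restricted~$P_4$, so by \cref{obs:charact-2cc-graph} the result is a 2-club cluster graph. I expect the main obstacle to be the reverse direction: given any editing set of size at most~$k'$, I must normalise it---without increasing its size---into an insertion-only solution of the canonical ``connect~$z$ to a dominating set'' form, and then read off a dominating set of size at most~$k$. Establishing this normalisation is where the gadget blow-ups and the diameter-two assumption do their work: one argues that deletions and cluster-splitting can never beat the canonical solution, and that every surviving insertion can be charged to a dominating vertex. Carrying out this exchange argument cleanly, while simultaneously controlling any spurious restricted~$P_4$s introduced by the gadget, is the crux of the proof.
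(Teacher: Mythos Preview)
Your plan matches the paper's proof almost exactly: it too reduces from \dominatingset{} on diameter-two graphs by attaching a source vertex~$x$ at distance three from every~$v_i$, argues the forward direction by inserting~$\{x,v\}$ for each~$v$ in a dominating set, and handles the backward direction by first ruling out deletions via a min-cut argument and then normalising every remaining insertion to the canonical form~$\{x,v_i\}$. The paper's concrete gadget---which instantiates your ``blow up with large cliques'' idea---is a single clique~$C$ of size~$(n{+}1)^2$ with vertices~$c_{i,j}$, where~$x$ is adjacent to all~$c_{0,j}$ and each~$v_i$ to all~$c_{i,j}$ ($i\neq 0$); this yields min-cut~$n{+}1>k$, avoids any hub vertex, keeps~$k'=k$ with no offset, and reduces your anticipated exchange argument to a five-case check.
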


\begin{proof}
Let~$G = (V, E)$ be a graph with diameter two. 
We construct a graph~$G'$ in such a way that~$G$ has a dominating set of size at most~$k$ if and only if~$G'$ has a 2-club editing set of size at most~$k$.
The graph~$G' = (V', E')$ can be broken down into the following parts: the original graph~$G$, a clique~$C \subseteq V'$ of cardinality~$(n+1)^2$, and a single vertex~$x$.
We assign two indices for the vertices~$c_{i, j} \in C$ such that~$i, j \in \left\{ 0, \dots, n \right\}$. 
The vertices in~$V$ only have one index:~$v_i \in V$, $i \in \left\{ 1, \dots, n \right\}$.
In addition to the existing edges in~$G$ and~$C$, add the following edges:
for each~$j \in \left\{ 0, \dots, n \right\}$ add~$\left\{ x, c_{0,j} \right\}$ and
for each~$c_{i,j} \in C, i \neq 0$, add~$\left\{ v_i, c_{i,j} \right\}$.
The graph~$G'$ has~$\bigO(n^4)$ edges and~$\bigO(n^2)$ vertices.
For a schematic picture of~$G'$ see \cref{fig:2clubw2hardexample}. 
Note that the only pairs of vertices with distance three are~$x$ and~$v_i \in V$, all others have distance at most two.
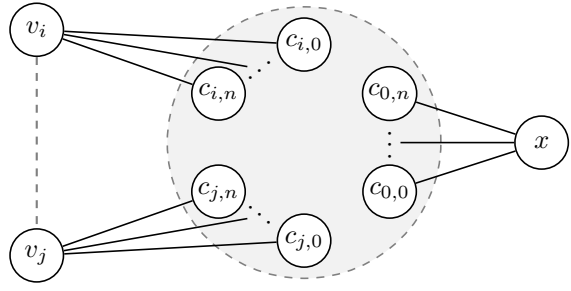
\begin{figure}[t!]
	\centering
	\begin{tikzpicture}[auto]
		\foreach \x / \name [count=\i] in {ci0/$c_{i,0}$, cin/$c_{i,n}$, cjn/$c_{j,n}$, cj0/$c_{j,0}$, c00/$c_{0,0}$, c0n/$c_{0,n}$} {
				\node[alterlarge] (\x) at ({360 * (\i-1) / 6 + 90}:1.3) {\name};
		}
		\node[rotate= 40] (cid) at ($(ci0)!0.5!(cin)$) {\dots};
		\node[rotate=  -40] (cjd) at ($(cj0)!0.5!(cjn)$) {\dots};
		\node[rotate=90] (c0d) at ($(c00)!0.5!(c0n)$) {\dots};

		\path (cid) ++(170:3) node[alterlarge] (vi) {$v_i$};
		\path (c0d) ++(  0:2) node[alterlarge] (x) {$x$};
		\path (cjd) ++(190:3) node[alterlarge] (vj) {$v_j$};
		\draw[gray,thick, dashed] (vi) -- (vj);			
		
		\draw[majarr] (x) edge (c00);
		\draw[majarr] (x) edge (c0d);
		\draw[majarr] (x) edge (c0n);

		\draw[majarr] (vi) edge (ci0);
		\draw[majarr] (vi) edge (cid);
		\draw[majarr] (vi) edge (cin);
		\draw[majarr] (vj) edge (cj0);
		\draw[majarr] (vj) edge (cjd);
		\draw[majarr] (vj) edge (cjn);

		\begin{pgfonlayer}{background}
			\filldraw[fill=gray!10,rounded corners=15mm, draw=black!50, dashed, semithick]
				(0,0) circle (1.8);
		\end{pgfonlayer}

	\end{tikzpicture}
	\caption{
		A schematic picture of the construction of~$G'$ in the proof of \cref{thm:2clubw2hard}. 
		The vertices in the gray circle form a clique, but only the vertices connected to~$v_i, v_j,$ or~$x$ are shown. 
		The dashed gray edge between~$v_i$ and~$v_j$ exists if $\{v_i,v_j\} \in E(G)$.}
	\label{fig:2clubw2hardexample}
\end{figure}

We claim that there exists a 2-club editing set of size at most~$k$ for~$G'$ (which only inserts edges) if and only if there exists a dominating set of size at most~$k$ for~$G$.

``$\Leftarrow$'': Let~$D$ be a dominating set for~$G$ with $|D|\leq k$, and~$F := \{ \{x, v\} \allowbreak  \mid \allowbreak v \in D \}$. 
Let~$H := G'[E' \triangle F ]$. 
For every~$v_i \in V$, either $v_i \in D$ and then~$\dist_H(x, v_i) = 1$, or~$v_i \notin D$ and then~$v_i$ has a neighbor in~$D$ and thus~$\dist_H(x, v_i) = 2$. 
This means that~$H$ is a 2-club cluster graph and~$F$ is a 2-club editing set for~$G'$ with $|F| \leq k$.
 
``$\Rightarrow$'': 
Let~$F$ be a 2-club editing set for~$G'$ with $|F| \leq k$ and~$H = G'[E' \triangle F ]$ be the resulting 2-club cluster graph.
Assume without loss of generality that~$F$ is minimal.
Note that the minimum cut of~$G'$ is~$n+1$ and that $k < n$. 
Removing any edge would only be optimal if~$H$ contained more than one 2-club cluster. 
Hence, we can assume that the 2-club editing set~$F$ does not delete any edges from~$G'$.

For any inserted edge~$\left\{ a, b \right\} \in F$ exactly one of the following cases applies, since the distance between~$x$ and some~$v_i \in V$ has to be reduced by means of inserting~$\left\{ a, b \right\}.$
\begin{itemize}
\item $\left\{ a, b \right\} = \left\{ v_i, x \right\}$:
	Then~$\dist_{H}(x, v_i) = 1$ and for~$a \in N_G(v_i)$~$\dist_{H}(x, a) \leq 2$. We interpret this as~$v_i$ being a dominating vertex in~$G$.

\item $\left\{ a, b \right\} = \left\{ v_i, c_{0,j} \right\}$:
	This edge enables a path of length two from~$v_i$ to~$x$ via~$c_{0,j}$. 
	This means that this edge is only of benefit to~$v_i$. 
	Then~$F' = (F \setminus \left\{ v_i, c_{0,j} \right\}) \cup \left\{ x, v_i \right\}$ is also a 2-club editing set with~$|F| = |F'|$.

\item $\left\{ a, b \right\} = \left\{ v_i, v_j \right\}$: 
	This means that one of the vertices has an edge to~$x$. 
	Without loss of generality assume that~$\left\{x, v_i \right\} \in F$. 
	Note that~$F$ is only minimal if~$\left\{ x, v_j \right\} \notin F$, as the edge~$\left\{ v_i, v_j \right\}$ is only of benefit to~$v_j$ and no other vertices since it enables a path of length two from~$v_j$ to~$x$ via~$v_i$. 
	Then~$F' = (F \setminus \left\{ v_i, v_j \right\}) \cup \left\{ x,  v_j\right\}$ is also a  2-club editing set with~$|F| = |F'|$.

\item $\left\{ a, b \right\} = \left\{ v_i, c_{j,k} \right\}, j \neq i, j \neq 0$:
	This means that there is an edge~$\left\{ x, c_{j,k} \right\} \in F$, otherwise~$F$ would not be minimal. 
	The edge~$\left\{ v_i, c_{j,k} \right\}$ enables a path of length two from~$v_i$ to~$x$ via~$c_{j,k}$. 
	This means that the edge is of no benefit to any other vertices. 
	Then~$F' = F \setminus \left\{ v_i, c_{j,k} \right\} \cup \left\{ x, v_i \right\}$ is also a 2-club editing set with~$|F| = |F'|$.

\item $\left\{ a, b \right\} = \left\{ x, c_{i,j} \right\}, i \neq 0$: 
	This edge enables a path of length two from~$v_i$ to~$x$ via~$c_{i,j}$. 
	In the previous case, we have seen that there exists an~$F'$ with~$\left\{ x, c_{i,j} \right\} \in F'$ such that there exists no edge~$\left\{ v_k, c_{i,j} \right\} \in F'$ with~$k \neq i$. 
	This means that the edge~$\left\{ x, c_{i,j} \right\}$ is of no benefit to any other vertices. 
	Then~$F'' = (F' \setminus \left\{ x, c_{i,j} \right\}) \cup \left\{ x, v_i \right\}$ is also a 2-club editing set with~$|F| = |F''|$.
\end{itemize}

Altogether, we know that there exists an~$F'$ with~$|F'| = |F|$ such that~$F'$ is a 2-club editing set of the form~$\left\{ \left\{ x,v \right\} \mid v \in D \right\}$ for some~$D \subseteq V$. This means that~$D$ is a dominating set for~$G$ with $|D| \leq k$.

Summarizing, the reduction from~$(G, k)$ to~$(G', k)$ is a valid parameterized reduction from \dominatingset{} for graphs with diameter two to \twoclubedit{}. 
Since \dominatingset{} is W[2]-hard for graphs of diameter two~\cite{Lok+13}, this yields that \twoclubedit{} is also W[2]-hard.
 \end{proof}

\section{No Polynomial Kernel for 2-Club Cluster Vertex Deletion} \label{sec:no-poly-kernel}

We use the OR-cross-composition framework of Bodlaender et al.~\cite{BJK14} to show that \twoclubvertexdelete{} admits no polynomial kernel with respect to~$k$.
Given an \NP-hard problem~$L$, an equivalence relation~$\R$ on the instances of~$L$ is a \emph{polynomial equivalence relation} if 
\begin{enumerate}[(i)]
 \item one can decide for any two instances in time polynomial in their sizes whether they belong to the same equivalence class, and
 \item for any finite set~$S$ of instances, $\R$ partitions~$S$ into at most~$(\max_{x \in S} |x|)^{O(1)}$ equivalence classes. %
\end{enumerate}
\begin{definition} 
	Given an \NP-hard problem~$L$, a parameterized problem $P$, and a polynomial equivalence relation~$\R$ on the instances of $L$, an \emph{OR-cross-composition} of $L$ into $P$ (with respect to $\R$) is an algorithm that takes $\ell$ $\R$-equivalent instances $\I_1,\ldots,\I_\ell$ of $L$ and constructs in time polynomial in $\sum_{i=1}^\ell |\I_\ell|$ an instance $(\I,k)$ such that 
	\begin{enumerate}
		\item $k$ is polynomially upper-bounded in $\max_{1\leq i\leq \ell}|\I_i|+\log(\ell)$ and 
		\item $(\I,k)$ is a yes-instance for $P$ if and only if there is at least one~$\ell'\in[\ell]$ such that $\I_{\ell'}$ is yes-instance for $L$. 
	\end{enumerate}
\end{definition}
If a parameterized problem~$P$ admits an OR-cross-composition for some \NP-hard problem~$L$, then~$P$ does not admit a polynomial kernel with respect to its parameterization, unless $\NP\subseteq \coNP/\poly$~\cite{BJK14}.

\begin{theorem}
	\label{thm:nopolykernel}
	\twoclubvertexdelete{} does not admit a polynomial kernel with respect to~$k$ unless $\NP\subseteq \coNP/\poly$.
\end{theorem}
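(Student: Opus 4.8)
The plan is to give an OR-cross-composition from an \NP-hard source problem into \twoclubvertexdelete{}, which by the framework of Bodlaender et al.\ rules out a polynomial kernel unless $\NP \subseteq \coNP/\poly$. As the source $L$ I would take the (unparameterized, \NP-hard) version of \twoclubvertexdelete{} itself. First I would fix the polynomial equivalence relation $\R$: two instances are $\R$-equivalent if they have the same number of vertices $n$ and the same budget $k$. Since $n,k \le |\I|$, this partitions any finite set into polynomially many classes, so I may assume all $\ell$ input instances $\I_1,\dots,\I_\ell$ share the same $n$ and the same $k$; padding with trivial no-instances lets me also assume $\ell = 2^t$ with $t = \bigO(\log \ell)$.

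The heart is a selection gadget of size $\bigO(t) = \bigO(\log \ell)$ attached to the $\ell$ instance graphs $G_1,\dots,G_\ell$. For each bit position $j \in \{1,\dots,t\}$ I introduce two \emph{switch} vertices $a_j^0, a_j^1$, and I attach $G_i$ (whose index has binary encoding $\beta(i) \in \{0,1\}^t$) to the switches according to $\beta(i)$, so that a retained switch acts as a universal (dominating) vertex for exactly those instances whose code disagrees with it in that bit. The intended behaviour is that a \emph{consistent selection} deletes one switch per bit ($t$ deletions) so that the $t$ retained switches spell out the code of a single instance $G_{i^*}$: this isolates $G_{i^*}$ from the gadget, so its restricted $P_4$'s survive and must be destroyed inside $G_{i^*}$, while every other instance keeps a retained dominating switch that forces its component to have diameter at most two. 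I set the budget to $k' := t + k$, which is polynomial in $n + \log\ell$ as required, and I add local forcing subgadgets at each bit together with large rigid cliques (in the spirit of the clique $C$ in \cref{thm:2clubw2hard}, so that critical separators cannot be deleted cheaply) to make the solver spend at least one deletion per bit and thereby commit to a selection.

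For correctness, direction ``$\Leftarrow$'' is easy: if some $G_{i^*}$ is a yes-instance with deletion set $S^*$ of size $\le k$, I perform the consistent selection for $i^*$ (cost $t$) and delete $S^*$ (cost $\le k$); using \cref{obs:charact-2cc-graph} I check that $G_{i^*}$ becomes restricted-$P_4$-free while all other instances collapse into diameter-two components via their retained dominators, so the $\le t+k = k'$ deletions suffice. Direction ``$\Rightarrow$'' carries the real work: I must argue that, because of the forcing subgadgets and the rigid cliques, any solution of size $\le k'$ is forced to realize a consistent selection — hence to expose exactly one instance $G_{i^*}$ — and to spend its remaining $\le k$ deletions destroying the restricted $P_4$'s inside $G_{i^*}$, so that $G_{i^*}$ is a yes-instance.

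I expect the main obstacle to be designing the attachment so that the \emph{simultaneous} neutralization of all $\ell-1$ non-selected instances does not itself create new restricted $P_4$'s \emph{between} different instances: since vertex deletion can only increase distances and never create shortcuts, two non-selected instances dominated by different retained switches can end up at distance exactly three. Controlling this seems to demand a careful choice of the index encodings — intuitively an (almost) equidistant, ``no-betweenness'' code of length $\bigO(\log\ell)$ ensuring that any two non-selected instances still share a common retained dominator — while simultaneously guaranteeing that the selected instance shares none, that no cheap ``select nobody'' solution neutralizes everything spuriously, and that the budget stays at $k + \bigO(\log\ell)$. Balancing these competing requirements (dense enough that every selection exposes an instance, spread enough to avoid cross-instance diameter-three pairs) is the delicate part; the non-hereditary nature of $2$-clubs emphasized in \cref{sect:2clubspecifics} is precisely what makes the distance bookkeeping subtle.
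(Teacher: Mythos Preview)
Your high-level framework (OR-cross-composition from \twoclubvertexdelete{} into itself with budget $k' = k + \Theta(\log\ell)$) coincides with the paper's. However, the paper uses a \emph{hierarchical} selector, not a flat bit-encoding: a depth-$\log\ell$ binary tree of selection gadgets, each consisting of two adjacent star centers $c_L,c_R$ with $k'+1$ private leaves, where $c_L$ is made adjacent to \emph{every} vertex in the recursively built left half (and symmetrically for $c_R$). The leaves force one center per level into the solution; the retained center is then universal in its whole subtree, collapsing that side into a single $2$-club in one step, and one recurses into the other side. The crucial feature is that a retained center dominates \emph{all} lower-level gadget leaves as well, so no cross-gadget distance-$3$ pairs can arise.

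Your flat design has a concrete gap you do not close. To force one deletion per bit you need a local obstruction at each pair $(a_j^0,a_j^1)$; the natural device is $k'{+}1$ pendant leaves on each switch. But then, after any consistent selection, for two distinct retained switches $a_j^b$ and $a_{j'}^{b'}$ a leaf of the first and a leaf of the second have no common neighbour (their sole neighbours are different switches), so the ``neutralized'' component is not a $2$-club and the ``$\Leftarrow$'' direction already fails. Attaching the leaves to further switches destroys the forcing: leaves of $a_j^0$ and of $a_j^1$ then acquire common neighbours among the switches of other bits and are at distance~$2$, so nothing compels a deletion in bit~$j$. Your ``no-betweenness'' code (which does exist for $t=O(\log\ell)$ by a routine probabilistic argument you do not supply) only controls distances \emph{between instance graphs}; it says nothing about the forcing appendages. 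This tension---each bit's two sides must be far apart for forcing, yet different bits' appendages must be mutually close after selection---is exactly what the paper's nesting dissolves, and your proposal does not.
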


\begin{proof}
	To show the result, we provide an OR-cross-composition from \twoclubvertexdelete{} to itself.
	To this end, we define~$\R$ as follows: two instances~$(G_1, k_1)$, $(G_2, k_2)$ of \twoclubvertexdelete{} are equivalent with respect to $\R$ iff~$k_1 = k_2$.
	Since the solution size is at most~$n$, this gives a polynomial equivalence relation.
	
	Given~$\ell$ $\R$-equivalent instances~$(G_1 = (V_1, E_1), k), \ldots, (G_\ell = (V_\ell, E_\ell),k)$, we construct a new instance~$(G' = (V', E'), k')$ as follows.
	Without loss of generality, assume that~$\ell$ is a power of two (otherwise copy instances until~$\ell$ is a power of two).
	We set~$k' := k + \log \ell$.
	To describe~$G'$, we need a simple \emph{selection-gadget} consisting of two stars with~$k'+1$ leaves each where the two center vertices are adjacent.
	Observe that in the selection-gadget the leaves of one star are at distance three to the leaves of the other star.
	Moreover, since each star has more than~$k'$ leaves, the only possibility to transform a selection-gadget into a 2-club cluster graph is to delete one of the two center vertices.
\begin{figure}[t!]
	\tikzstyle{simple}=[inner sep=1pt,semithick] 
	\tikzstyle{alter}=[star,star points=8,minimum size =10pt, star point ratio=2, draw, inner sep=1pt,semithick,fill=white]
	{
	\begin{center}
		\includegraphics{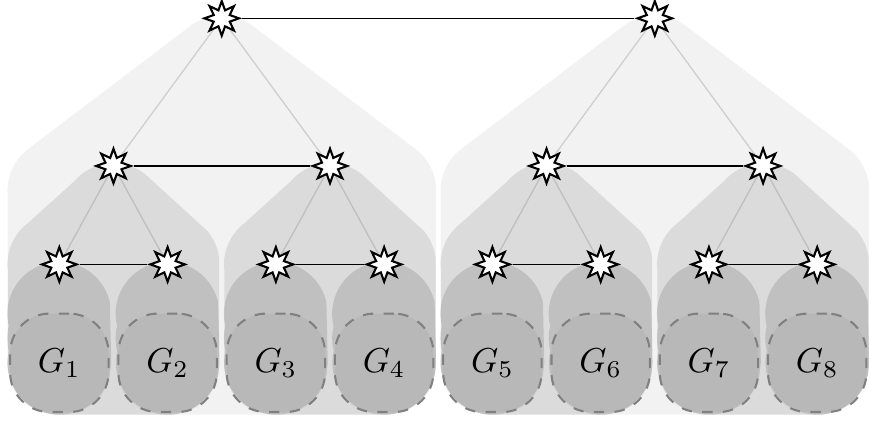}
	\end{center}
	}
	\caption{Illustration of the construction for \cref{thm:nopolykernel} exemplified for~$\ell=8$. Star-shaped vertices have~$k'+1$ additional leaves and are connected to all vertices in the gray-shaded area below them.}
	\label{figure:nopolykernel}
\end{figure}

	We can now define~$G'$: 
	To this end, we recursively create an ``instance-selector'' that forces the selection of exactly one instance~$G_i$ as shown in \cref{figure:nopolykernel}.
	First, add a selection-gadget with the two center vertices~$c_L$ and~$c_R$ (left and right).
	Second, recursively build the two graphs~$G_L, G_R$ composing~$G_1, \ldots, G_{\ell/2}$ and~$G_{\ell/2 + 1}, \ldots, G_\ell$ respectively until $G_L,G_R$ consist of only one input instance.
	Make every vertex in~$G_L$ (in~$G_R$) adjacent to~$c_L$ (to~$c_R$).
	Note that this recursive procedure has recursion depth~$\log \ell$.
	
	The construction of~$(G',k')$ can clearly be done in polynomial time.
	It remains to show the correctness, that is, $(G', k')$ is a yes-instance if and only if there is a yes-instance~$(G_i,k)$, $i \in [\ell]$.
	
	``$\Rightarrow$:''
	Let~$S' \subseteq V'$ be a minimal solution of size at most~$k + \log \ell$ for~$G'$.
	Observe that by construction of~$G'$ at least one of the two center vertices~$c_L$ and~$c_R$ of the ``topmost'' selection-gadget has to be in~$S'$.
	Assume without loss of generality that~$c_L \in S$ (the other case is completely analogous).
	Observe that the connected component~$C_R$ in~$G' - c_L$ that contains~$c_R$ is a 2-club since~$c_R$ is a universal vertex in~$C_R$.
	Since~$S'$ is minimal, it follows that~$S'$ contains no vertex in~$C_R$.
	By construction, the connected component containing the graphs~$G_1, \ldots, G_{\ell / 2}$ contains a selection-gadget where again one of the two center vertices has to be in~$S'$.
	By induction on the recursion depth one can show that~$S'$ contains exactly~$\log \ell$ center vertices of the selection-gadgets.
	Moreover, there is exactly one graph~$G_i$ such that all~$\log \ell$ center vertices adjacent to the vertices in~$G_i$ are in~$S'$.
	Since~$S'$ is a solution for~$G'$ it follows that~$S' \cap V_i$ is a solution of size at most~$k$ for~$G_i$.

	``$\Leftarrow$:''
	Let~$i \in [\ell]$ be such that~$(G_i,k)$ is a yes-instance and let~$S \subseteq V_i$ be the solution for~$G_i$, $|S| \le k$. 
	The solution~$S' \subseteq V'$ for~$G'$ consists of~$S$ and every adjacent center vertex in a selection-gadget.
	Observe that~$|S'| \le k + \log \ell$ since, by construction, there are~$\log \ell$ selection-gadgets that contain a vertex adjacent to~$G_i$.
	Thus, it remains to show that $G' - S'$ is a 2-club cluster graph.
	To this end, observe that, by assumption, $G_i - S$ is a 2-club cluster graph. 
	Note that each graph~$G_j$, $i \ne j$, is in~$G' - S'$ in a connected component with a center vertex~$c$ of a selection-gadget such that the other center vertex of this gadget is in~$S'$.
	Observe that each such center vertex~$c$ is a universal vertex in its connected component in~$G' - S'$.
	Thus this connected component containing~$c$ forms a 2-club.
	Since each connected component of~$G' - S'$ contains vertices of at least one graph~$G_j$, it follows that~$G' - S'$ is a 2-club cluster graph.
 \end{proof}

\section{Algorithms for 2-Club Cluster Vertex Deletion} \label{sec:algorithms}

In this section, we first formulate \twoclubvertexdelete{} as an Integer Linear Program (ILP) and then introduce a branch\&bound-algorithm solving a generalization of \twoclubvertexdelete{}.
We use the ILP-formulation in our experiments to evaluate our branch\&bound algorithm.

\subsection{ILP Formulation}\label{sec:ILP}
By \cref{obs:charact-2cc-graph}, a graph is a 2-club cluster graph if and only if it contains no restricted~$P_4$. 
Recall that a restricted~$P_4$ is an induced~$P_4$ $stuv$ that is also a shortest path between~$s$ and~$t$.
Thus, there exists no vertex~$w \in N(s)\cap N(v)$ in the common neighborhood of~$s$ and~$v$.
The deletion of a vertex cannot create any new induced path but it might \enquote{promote} an induced~$P_4$ to a restricted~$P_4$.
Hence, if~$N(s) \cap N(v) = \emptyset$ for any induced~$P_4$~$stuv$ in~$G$, then at least one vertex from~$stuv$ must be deleted.

We introduce a variable~$x_v$ for each vertex~$v \in V$. This variable has a value of~$1$ if and only if~$v$ is in the 2-club vertex deletion set.
This leads to the following ILP formulation:
\begin{align*}
& \text{min:}	& & \sum_{v \in V} x_v \\
& \text{s.t.}	& & x_s + x_t + x_u + x_v + \smashoperator{\sum_{b \in N(s) \cap N(v)}} (1-x_b) \geq 1 	& & \text{ for all induced } P_4 \text{'s } stuv \text{ in } G \\
&      		& & x_v \in \left\{ 0, 1 \right\} 									& & \text{ for all } v \in V.
\end{align*}

\subsection{Branch\&Bound Algorithm} \label{app:sec:BB}

In this section we provide details on our branch\&bound algorithm for a slightly more general variant of \twoclubvertexdelete{} that allows more flexibility in the design of data reduction rules and for deriving lower bounds.
Based on a simple search-tree, this algorithm extensively uses data reduction rules and lower bounds. 
While these lower bounds and data reduction rules give a significant speedup in practice as shown in \cref{sec:impl-exp} (also cf.\, \citet{KNN15}), we could not show an improved theoretical worst-case bound.

\subsubsection{Search Tree} \label{ssec:search-tree}

A graph is a 2-club cluster graph if and only if there exists no restricted~$P_4$ (\cref{obs:charact-2cc-graph}).
This observation yields a straight-forward~$O^*(4^k)$ search tree algorithm for \twoclubvertexdelete.
To shrink the search tree, we introduce the concept of \emph{permanent vertices}. A vertex is permanent if it is not allowed to be removed from the graph.  
Additionally, we introduce weights on the vertices.
These two concepts will allow us to apply a wider range of data reduction rules and lower bounds in the search tree algorithm.
The resulting problem is defined as follows: 

\problemdef{\cstrtwoclubvertexdeletelong{} (\cstrtwoclubvertexdelete{})}
	{An undirected graph~$G = (V, E)$, an integer~$k \in \NN$, a set~$F \subseteq V$ of permanent vertices, and a weight function~$w\colon V \rightarrow \NN^+$.} %
	{Is there an~$S \subseteq V$ with~$w(S) \leq k$ and~$S \cap F = \emptyset$ such that~$G[V \setminus S]$ is a 2-club cluster graph?}

	Note that an instance~$(G, k)$ of \twoclubvertexdelete{} is clearly equivalent to the instance~$(G, k, \emptyset, w \equiv 1)$ of \cstrtwoclubvertexdelete{}. 

Our algorithm uses a simple branching rule that takes a restricted~$P_4$ and branches into all four cases of deleting one vertex which implies updates of the set~$F$ of permanent vertices in each branch.
If some vertex of the restricted~$P_4$ $stuv$ is already in~$F$, then we skip the corresponding case in the branching.
Thus, the branching itself ``grows'' the set~$F$ of permanent vertices that will reduce the cases to be considered later in the branching.
Moreover, if more than one restricted~$P_4$ exists, then the algorithm chooses one with most vertices in~$F$ and uses the weights of the vertices as tiebreaker.
\begin{brule} \label{br:2cvd_overlap}
	Let~$\mathcal{I} = (G, k, F, w)$ be an instance of \cstrtwoclubvertexdelete{}. If~$G$ is not a 2-club cluster graph, then find a restricted~$P_4$~$stuv$ and split~$\mathcal{I}$ into four smaller instances~$\mathcal{I}_s, \mathcal{I}_t, \mathcal{I}_u, \mathcal{I}_v$ as follows:
	\begin{itemize}
	\item $\mathcal{I}_s = (G-s, k-w(s), F, w)$,
	\item $\mathcal{I}_t = (G-t, k-w(t), F \cup \left\{ s \right\},w)$,
	\item $\mathcal{I}_u = (G-u, k-w(u), F \cup \left\{ s, t \right\},w)$,
	\item $\mathcal{I}_v = (G-v, k-w(v), F \cup \left\{ s, t, u \right\},w)$.
	\end{itemize}
	If an instance~$\mathcal{I}_x$ was derived by removing a permanent vertex~$x \in \left\{ s, t, u, v \right\} \cap F$, then do not branch on~$\mathcal{I}_x$.
\end{brule}

\begin{lemma}
	\label{lem:branch-correct}
	\cref{br:2cvd_overlap} is correct.
\end{lemma}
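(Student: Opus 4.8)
The plan is to show that Branching Rule~\ref{br:2cvd_overlap} is exhaustive and safe: the recursion covers every way of turning~$G$ into a 2-club cluster graph while respecting the permanent set, and the set~$F$ is only ever grown in ways that cannot rule out an optimal solution. Concretely, I would argue that the original instance~$\mathcal{I}$ is a yes-instance if and only if at least one of the child instances~$\mathcal{I}_s, \mathcal{I}_t, \mathcal{I}_u, \mathcal{I}_v$ (those actually branched on) is a yes-instance.

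First I would handle the direction that a solution for some child yields a solution for~$\mathcal{I}$. This is the easy inclusion: if~$S_x$ is a solution for~$\mathcal{I}_x = (G - x, k - w(x), F \cup \ldots, w)$ with~$x \in \{s,t,u,v\}$, then~$S_x \cup \{x\}$ is a vertex set with total weight at most~$k$, it is disjoint from~$F$ (since~$x \notin F$ whenever we branch on~$\mathcal{I}_x$, and~$S_x$ is disjoint from the enlarged permanent set which contains~$F$), and~$G - x - S_x = G[V \setminus (S_x \cup \{x\})]$ is a 2-club cluster graph by assumption. Hence~$\mathcal{I}$ is a yes-instance.

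The converse is the substantive direction and the main obstacle. Suppose~$\mathcal{I}$ is a yes-instance and let~$S$ be a solution with~$w(S) \le k$ and~$S \cap F = \emptyset$. By \cref{obs:charact-2cc-graph}, since~$G$ is not a 2-club cluster graph it contains the restricted~$P_4$~$stuv$ chosen by the rule; as~$stuv$ is an induced path with~$\dist_G(s,v)=3$, deleting a single internal or endpoint vertex is necessary, so~$S$ must contain at least one of~$s,t,u,v$. The delicate point is matching~$S$ to exactly one child instance whose permanent set is respected: the rule assigns~$\mathcal{I}_t$ the permanent set~$F \cup \{s\}$, $\mathcal{I}_u$ the set~$F \cup \{s,t\}$, and~$\mathcal{I}_v$ the set~$F \cup \{s,t,u\}$, which encodes the case distinction that if we choose to delete~$t$ then we commit to keeping~$s$, and so on. I would argue that one can always pick the \emph{first} vertex of the ordered tuple~$(s,t,u,v)$ that lies in~$S$; say this is the vertex~$x$. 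Then~$S \setminus \{x\}$ is disjoint from the enlarged permanent set of~$\mathcal{I}_x$, because that set only adds vertices appearing \emph{before}~$x$ in the tuple, none of which are in~$S$ by the choice of~$x$ as the first. Consequently~$S \setminus \{x\}$ is a valid solution for~$\mathcal{I}_x$ of weight~$w(S) - w(x) \le k - w(x)$, and since~$x \notin F$ the rule does indeed branch on~$\mathcal{I}_x$. This establishes that some branched child is a yes-instance.

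Finally I would note the termination/consistency remark that the rule correctly skips any~$\mathcal{I}_x$ with~$x \in F$: in that case~$x$ cannot be deleted, so no solution deletes~$x$, and the ``first vertex of~$S$ in the tuple'' argument above never selects such an~$x$ (the enlarged permanent sets are designed so that an~$F$-vertex earlier in the tuple forces us past it). Thus discarding those branches loses no solutions, and the partition into the at-most-four cases is complete. Putting the two directions together yields the equivalence, proving that \cref{br:2cvd_overlap} is correct.
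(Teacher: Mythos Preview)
Your proof is correct and follows essentially the same approach as the paper: the paper's cascading ``if~$s\in S$ \dots\ otherwise if~$t\in S$ \dots'' is exactly your ``pick the first vertex of the tuple~$(s,t,u,v)$ lying in~$S$'' argument, and you have made the verification that~$S\setminus\{x\}$ respects the enlarged permanent set of~$\mathcal{I}_x$ more explicit than the paper does. One minor remark: in your final paragraph, the reason the selected~$x$ is never in~$F$ is simply that~$S\cap F=\emptyset$, not anything about the design of the enlarged permanent sets.
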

{
\begin{proof}
	If~$G$ is already a 2-club cluster graph, then there is nothing to do. 
	Otherwise, assume that~$G$ contains~$stuv$, a restricted~$P_4$.
	We have to show that~$\mathcal{I}$ is a yes-instance if and only if at least one of the instances~$\mathcal{I}_s, \mathcal{I}_t, \mathcal{I}_u, \mathcal{I}_v$ is a yes-instance.

	\enquote{$\Leftarrow$}:
	Let~$S'$ be a 2-club vertex deletion set for one of the four instances where the vertex~$x \in \left\{ s, t, u, v \right\}$ was deleted. 
	Because all four instances have a set of permanent vertices that is a superset of~$F$, the set~$S := S' \cup \left\{ x \right\}$ is a 2-club vertex deletion set for~$\mathcal{I}$ unless~$x$ is permanent, in which case the instance~$\mathcal{I}_x$ would have been skipped. 

	\enquote{$\Rightarrow$}:
	If~$\mathcal{I}$ is a yes-instance, then at least one of the vertices~$s, t, u,$ or~$v$ has to be deleted. 
	Let~$S$ be a 2-club vertex deletion set for~$G$ with~$w(S) \leq k$ and~$S \cap F = \emptyset$. 
	If~$s \in S$, then~$\mathcal{I}_s$ is clearly a yes-instance; otherwise if~$t \in S$, then~$\mathcal{I}_t$ is a yes-instance; otherwise if~$u \in S$, then~$\mathcal{I}_u$ is a yes-instance; otherwise~$v \in S$ and~$\mathcal{I}_v$ is a yes-instance.
 \end{proof}
}

An additional feature of \cref{br:2cvd_overlap} is that we can skip branches where a permanent vertex would have been deleted.
Combining \cref{br:2cvd_overlap} with the fact that one can find a restricted~$P_4$ by running a breadth-first-search from each vertex, we arrive at the following.

\begin{proposition}\label{prop:4^k-searchtree}
	\cstrtwoclubvertexdeletelong{} can be solved in~$O(4^k \cdot nm)$ time.
\end{proposition}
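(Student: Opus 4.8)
The plan is to bound the running time of the branch\&bound algorithm that repeatedly applies \cref{br:2cvd_overlap}. Since \cref{lem:branch-correct} already establishes correctness of the branching rule, what remains is to analyze the size of the search tree and the cost of processing each node. I would argue that the algorithm is: at each node, check whether the current graph is a 2-club cluster graph, and if not, find a restricted~$P_4$ $stuv$ and recurse into the (at most) four child instances defined by \cref{br:2cvd_overlap}.

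\textbf{Bounding the search-tree size.} First I would observe that every branch strictly decreases the budget: in each of~$\mathcal{I}_s, \mathcal{I}_t, \mathcal{I}_u, \mathcal{I}_v$ the parameter drops from~$k$ to~$k - w(x)$ for the deleted vertex~$x$, and since~$w\colon V \to \NN^+$ we have~$w(x) \ge 1$, so~$k$ decreases by at least one along every edge of the search tree. A node can be expanded only while its budget is nonnegative, so the recursion depth is at most~$k$. Because each internal node has at most four children, the number of nodes is~$O(4^k)$. (If one prefers, branches that would delete a permanent vertex are skipped, which only decreases the branching factor, so the~$O(4^k)$ bound is safe.)

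\textbf{Bounding the per-node work.} Next I would bound the cost of processing a single node, which is dominated by testing whether~$G$ is a 2-club cluster graph and, if not, exhibiting a restricted~$P_4$. By \cref{obs:charact-2cc-graph} it suffices to detect a restricted~$P_4$, i.e.\ a pair of vertices at distance exactly three. This is done by running a breadth-first search from each of the~$n$ vertices, each BFS costing~$O(n+m)=O(m)$ time (assuming the graph is connected; otherwise sum over components gives the same bound), for a total of~$O(nm)$ per node. From the BFS layers one can read off a shortest path of length three and hence a restricted~$P_4$ in the same time bound. Multiplying the~$O(4^k)$ nodes by the~$O(nm)$ per-node cost yields the claimed~$O(4^k \cdot nm)$ running time.

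\textbf{Main obstacle.} The arguments above are individually routine; the only point requiring a little care is justifying that detecting a restricted~$P_4$ — not merely an induced~$P_4$ — can be done within~$O(nm)$ time, since the definition of restricted~$P_4$ involves the global distance~$\dist_G(s,v)=3$ rather than a purely local pattern. The BFS-from-every-vertex approach resolves this cleanly, because it computes all pairwise distances and thereby certifies the distance-three condition directly, so I expect this to be a minor technical step rather than a genuine difficulty.
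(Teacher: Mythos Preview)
Your proof is correct and follows essentially the same approach as the paper, which simply states that combining \cref{br:2cvd_overlap} with the observation that a restricted~$P_4$ can be found by running a BFS from every vertex yields the bound. Your write-up merely makes explicit the two ingredients the paper leaves implicit: that the positive integer weights force the budget to drop by at least one per branch (giving the~$O(4^k)$ tree size) and that~$n$ BFS runs cost~$O(nm)$ per node.
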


\subsubsection{Data Reduction Rules} \label{sect:2cvd_rrules}
Data reduction rules may be considered the most valuable contribution of parameterized algorithmics to algorithm engineering~\cite{KNN15}.
In this section, we will introduce polynomial-time data reduction rules that can be applied in each step of our search tree algorithm.
We categorize these rules into two types: 
The first type removes vertices and shrinks the graph. 
The second type increases the set of permanent vertices, which in turn can trigger data reduction rules of the first type, and decreases the number of branches

\paragraph*{Shrinking the graph.}
We start with describing rules of the first type.
First note that we can always safely remove a connected component of the graph that is already a 2-club. 
\begin{rrule}\label{rr:2cvd_remove_2clubs}
If~$G$ contains aconnected  component~$C$ that is a 2-club, then delete all vertices in~$C$.
\end{rrule}

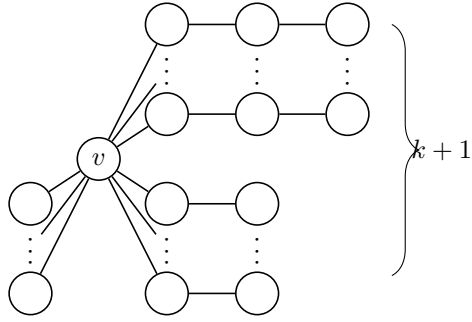
\begin{figure}[t]
	\centering
	\begin{subfigure}[t]{.45\textwidth}
		\centering
		\begin{tikzpicture}
			\node[alter] (p-11) at (0,0) {};
			\node[alter, right=4ex of p-11] (p-12) {};
			\node[alter, right=4ex of p-12] (p-13) {};

			\draw[majarr] (p-11) edge (p-12);
			\draw[majarr] (p-12) edge (p-13);

			\node[alter, below=4ex of p-11] (p-21) {};
			\node[alter, right=4ex of p-21] (p-22) {};
			\node[alter, right=4ex of p-22] (p-23) {};
			
			\draw[majarr] (p-21) edge (p-22);
			\draw[majarr] (p-22) edge (p-23);

			\node[alter, below=4ex of p-21] (p-32) {};
			\node[alter, left =8ex of p-32] (p-31) {};
			\node[alter, right=4ex of p-32] (p-33) {};
			
			\draw[majarr] (p-32) edge (p-33);

			\node[alter, below=4ex of p-32] (p-42) {};
			\node[alter, left =8ex of p-42] (p-41) {};
			\node[alter, right=4ex of p-42] (p-43) {};
			
			\draw[majarr] (p-42) edge (p-43);

			\coordinate (m1) at ($(p-11)!0.5!(p-42)$);
			\coordinate (m2) at ($(p-31)!0.5!(p-32)$);
			\coordinate (m3) at ($(m1)!0.5!(m2)$);

			\node[alter, left=4ex of m1] (v) {$v$};
			\draw[majarr] (v) edge (p-11);
			\draw[majarr] (v) edge (p-21);
			\draw[majarr] (v) edge (p-31);
			\draw[majarr] (v) edge (p-32);
			\draw[majarr] (v) edge (p-41);
			\draw[majarr] (v) edge (p-42);

			\node[rotate= 90] (p-1d) at ($(p-11)!0.5!(p-21)$) {\dots};
			\node[rotate= 90] (p-2d) at ($(p-12)!0.5!(p-22)$) {\dots};
			\node[rotate= 90] (p-3d) at ($(p-13)!0.5!(p-23)$) {\dots};
			\node[rotate= 90] (p-4d) at ($(p-31)!0.5!(p-41)$) {\dots};
			\node[rotate= 90] (p-5d) at ($(p-32)!0.5!(p-42)$) {\dots};
			\node[rotate= 90] (p-6d) at ($(p-33)!0.5!(p-43)$) {\dots};

			\draw[majarr] (v) edge (p-1d);
			\draw[majarr] (v) edge (p-4d);
			\draw[majarr] (v) edge (p-5d);

			\coordinate[right=1ex of p-13] (b1);
			\coordinate[below=22ex of b1] (b2);
			\draw [decorate,decoration={brace,amplitude=10pt,raise=4pt},yshift=0pt]
				(b1) -- (b2) node [black,midway,xshift=0.8cm] {$k+1$};
		\end{tikzpicture}
	\end{subfigure}
\caption{A graph with~$k+1$~restricted~$P_4$'s that overlap only in vertex~$v$. 
}
\label{fig:k_overlapping_p4}
\end{figure}

Now let us consider a graph with~$k+1$ restricted~$P_4$'s that only overlap in one vertex~$v$ as shown in \cref{fig:k_overlapping_p4}. 
We clearly have to delete~$v$.
This basic observation can be generalized to our weighted case. 
\begin{rrule}\label{rr:w2cvd_k_disjoint}
	Let~$\mathcal{P}$ be a set of restricted~$P_4$'s that each contain the vertex~$v$ and such that each vertex~$u \in V$ other than~$v$ is contained in at most~$w(u)$ many restricted~$P_4$'s in~$\mathcal{P}$. 
	If~$|\mathcal{P}| \geq k+1$, then delete~$v$ and decrease~$k$ by~$w(v)$.
\end{rrule}
\begin{lemma}
\cref{rr:w2cvd_k_disjoint} is safe.
\end{lemma}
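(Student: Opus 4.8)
The plan is to reduce the safeness of \cref{rr:w2cvd_k_disjoint} to a single structural claim: \emph{every} 2-club vertex deletion set~$S$ for the instance~$\mathcal{I} = (G, k, F, w)$ with~$w(S) \le k$ and~$S \cap F = \emptyset$ must contain~$v$ (in particular, one may assume~$v \notin F$, since otherwise the claim shows~$\mathcal{I}$ to be a no-instance). Granting this claim, both directions of safeness become routine bookkeeping. For the forward direction, given such an~$S$ we have~$v \in S$, so~$S' := S \setminus \{v\}$ satisfies~$w(S') = w(S) - w(v) \le k - w(v)$ and~$S' \cap F = \emptyset$, and~$(G - v)[(V \setminus \{v\}) \setminus S'] = G[V \setminus S]$ is a 2-club cluster graph; hence~$(G - v, k - w(v), F, w)$ is a yes-instance. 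For the backward direction, given a solution~$S'$ of~$(G - v, k - w(v), F, w)$, the set~$S := S' \cup \{v\}$ has weight at most~$k$, avoids~$F$ (using~$v \notin F$), and induces exactly the same graph~$G[V \setminus S]$, so~$\mathcal{I}$ is a yes-instance.

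It remains to prove the claim. First I would observe that any restricted~$P_4$ all of whose four vertices survive the deletion of~$S$ remains a restricted~$P_4$ in~$G[V \setminus S]$: deleting vertices neither adds edges (so the induced~$P_4$ stays induced) nor shortens distances (so a distance of three between its endpoints is preserved—witnessed from above by the surviving path and bounded from below by the monotonicity of distances under vertex deletion). Consequently, by \cref{obs:charact-2cc-graph}, a 2-club vertex deletion set~$S$ must contain at least one vertex of \emph{every} restricted~$P_4$ in~$\mathcal{P}$; that is, $S$ is a hitting set for~$\mathcal{P}$.

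The heart of the argument, and the step I expect to be the main obstacle, is a weighted double-counting argument showing that~$v \in S$. Suppose toward a contradiction that~$v \notin S$. Then each~$P \in \mathcal{P}$ is hit by some vertex of~$S \cap (P \setminus \{v\})$. I would count the pairs~$(P, u)$ with~$P \in \mathcal{P}$ and~$u \in S \cap (P \setminus \{v\})$ in two ways: on the one hand, every~$P \in \mathcal{P}$ contributes at least one such pair, giving at least~$|\mathcal{P}| \ge k+1$ pairs; on the other hand, each~$u \in S$ lies in at most~$w(u)$ members of~$\mathcal{P}$ by hypothesis, so the number of pairs is at most~$\sum_{u \in S} w(u) = w(S) \le k$. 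This yields~$k + 1 \le k$, a contradiction, so~$v \in S$. The only subtlety to watch is that the weight bound is \emph{per vertex} (each~$u \ne v$ lies in at most~$w(u)$ of the~$P_4$'s in~$\mathcal{P}$), which is precisely what makes the weighted sum~$\sum_{u \in S} w(u)$ a valid upper bound on the number of restricted~$P_4$'s that~$S \setminus \{v\}$ can hit; this generalizes the unweighted intuition of \cref{fig:k_overlapping_p4}, where the~$P_4$'s overlap only in~$v$.
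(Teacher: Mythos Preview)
Your proof is correct and takes essentially the same approach as the paper: both argue that any feasible solution must contain~$v$, since hitting all~$|\mathcal{P}| \ge k+1$ restricted~$P_4$'s without~$v$ costs at least~$|\mathcal{P}|$ in weight (each~$u \ne v$ can hit at most~$w(u)$ of them). Your version is more explicit---framing this as a double count of incidence pairs, justifying that restricted~$P_4$'s persist under vertex deletion, and handling the permanent-vertex set~$F$---but the core argument is identical to the paper's terser cost bound.
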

\begin{proof}
	Let~$v$ and~$\mathcal{P}$ be as above. 
	Clearly, deleting~$v$ would eliminate all restricted~$P_4$'s in~$\mathcal{P}$. 
	Let~$u \in V$ be some other vertex. 
	Denote by~$\ell$ the number of~$P_4$'s in~$\mathcal{P}$ that contain~$u$, which means that deleting~$u$ eliminates~$\ell$~$P_4$'s in~$\mathcal{P}$. The cost of deleting~$u$ is~$w(u) \geq \ell$. This means that eliminating all~$P_4$'s in~$\mathcal{P}$ without deleting~$v$ has a cost of at least~$|\mathcal{P}| \geq k+1$, which is not possible with a budget of~$k$.
 \end{proof}

For the next data reduction rule, we consider \emph{twins}, that is, two vertices with either the same closed neighborhood or the same open neighborhood.    
\begin{observation} \label{thm:2club_with_twins}
If a graph is a 2-club and contains two twins~$u, v$, then after deleting one of them, the graph remains a 2-club. Likewise a 2-club to which a twin is added remains a 2-club. 
\end{observation}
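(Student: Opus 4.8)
The plan is to treat the two halves as essentially inverse operations and to exploit that twins are \emph{interchangeable}: any length-at-most-two path using one twin can be rerouted through the other. First I would record the elementary fact that drives everything: if $u,v$ are twins, then for every vertex $z \notin \{u,v\}$ we have $z \in N(u) \iff z \in N(v)$. This is immediate when $N(u)=N(v)$ (open twins, which are necessarily \emph{non-adjacent}, since $u \in N(v)$ would force $u \in N(u)$), and for closed twins it follows from $N[u]=N[v]$ because for $z \notin \{u,v\}$ membership in the closed and the open neighborhood coincide. Note also that closed twins are adjacent, as $v \in N[v]=N[u]$ and $v \neq u$ give $v \in N(u)$.

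For the deletion direction I would delete $v$ and verify the diameter condition for an arbitrary pair $a,b \in V \setminus \{v\}$. Since $G$ is a 2-club, $\dist_G(a,b) \le 2$, so $a$ and $b$ are adjacent or share a common neighbor $w$ in $G$. If they are adjacent, or if some common neighbor $w \neq v$ exists, that certificate survives in $G-v$ and we are done. The only case requiring work is when $v$ is the \emph{unique} common neighbor of a non-adjacent pair $a,b$; here I claim $u$ can assume the role of $v$. As $a,b \in N(v)$, the interchange fact yields $a,b \in N(u)$ \emph{provided} $a,b \notin \{u,v\}$, and the edges $ua$, $ub$ persist in $G-v$, giving $\dist_{G-v}(a,b) \le 2$.

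The main obstacle, and the step I would handle most carefully, is ruling out that $u$ itself equals one of the endpoints $a,b$. Suppose $a=u$. Then $a=u \in N(v)$, so $u$ and $v$ are adjacent, which forces the closed-twin case $N[u]=N[v]$. But then $b \in N(v) \subseteq N[v]=N[u]$ together with $b \neq u$ gives $b \in N(u)=N(a)$, contradicting the assumption that $a,b$ are non-adjacent. Hence $a \neq u$, and symmetrically $b \neq u$; this legitimizes the reroute through $u$ and completes the deletion direction.

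For the addition direction I would argue symmetrically. Writing $G' = G + v'$ with $v'$ a twin of $u$, distances among the old vertices are unchanged, so only pairs $(a,v')$ with $a \in V(G)$ need checking. Using $\dist_G(a,u) \le 2$ together with the interchange fact, any witness of $\dist_G(a,u) \le 2$ (an $a$--$u$ edge, a common neighbor, or the identity $a=u$) converts into a witness of $\dist_{G'}(a,v') \le 2$; the case $a=u$ with an open twin uses that a 2-club on at least two vertices has minimum degree at least one, so $u$ has a neighbor serving as the intermediate vertex. The only degenerate situation is attaching an open (non-adjacent) twin to a single isolated vertex, which produces a disconnected graph; this is excluded once the 2-club has at least two vertices, so the statement holds as claimed.
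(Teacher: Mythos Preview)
The paper states this as an observation without proof, treating it as self-evident, so there is no argument to compare against. Your proof is correct and is the natural one: the rerouting idea---replace any length-two path through one twin by the same path through the other---together with the careful exclusion of the case where an endpoint coincides with~$u$ handles everything cleanly. Two minor remarks. First, ``distances among the old vertices are unchanged'' should strictly read ``do not increase,'' but since they were already at most two this is harmless. Second, the degenerate case you flag (adding an open twin to a single isolated vertex) is a genuine counterexample to the second sentence as literally stated; in the paper this does no damage, since the observation is only ever invoked in the deletion direction (in the safety proof of the twin reduction rule), but your caveat is well taken.
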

Due to \cref{thm:2club_with_twins} we can delete one of the twins, because they are always in the same 2-club in an optimal solution. 
\begin{rrule}\label{rr:2cvd_twin}
Given two vertices~$u, v \in V$ such that either~$N[u] = N[v]$ or~$N(u) = N(v)$, %
delete~$v$ and set~$w(u)$ to~$w(u) + w(v)$.
\end{rrule}

\begin{lemma}
\label{lem:twins}
\cref{rr:2cvd_twin} is safe.
\end{lemma}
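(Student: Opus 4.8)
The plan is to show that Reduction Rule~\ref{rr:2cvd_twin} is safe, meaning $(G,k,F,w)$ is a yes-instance if and only if the reduced instance (in which $v$ is deleted and $w(u)$ is increased to $w(u)+w(v)$) is a yes-instance. The key fact driving the argument is Observation~\ref{thm:2club_with_twins}: twins are interchangeable within a 2-club and can always be placed in the same cluster. So the conceptual claim is that there is always an optimal solution that either deletes both $u$ and $v$ or deletes neither, and this is exactly what the weight-merging operation encodes.

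First I would fix notation: let $(G',k',F',w')$ denote the reduced instance, where $G' = G - v$, $w'(u) = w(u)+w(v)$, $w'$ agrees with $w$ elsewhere, $F' = F$ (handling the case where $v$ is permanent as a boundary case, since then $v$ cannot be deleted and the rule must be adjusted or not applied), and $k' = k$. For the backward direction, given a solution $S'$ for $(G',k',F',w')$, I would lift it to a solution $S$ for $(G,k,F,w)$ by the rule: if $u \in S'$ then set $S := S' \cup \{v\}$ (paying $w(v)$, which is already accounted for because $w'(u)=w(u)+w(v)$, so $w(S)=w'(S') \le k$); if $u \notin S'$ then set $S := S'$. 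In the first case both twins are deleted; in the second neither is, and since $u,v$ were twins in $G$ and $u$ survives, the induced subgraph $G[V\setminus S]$ is obtained from $G'[V'\setminus S']$ by re-adding the twin $v$ of $u$, which by Observation~\ref{thm:2club_with_twins} keeps each affected component a 2-club. I must check this component-wise and confirm that re-adding $v$ creates no new restricted~$P_4$, which follows because $v$ has the same neighborhood as the still-present $u$.

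For the forward direction, given a solution $S$ for $(G,k,F,w)$ I would first argue that without loss of generality $u$ and $v$ are treated symmetrically, i.e.\ there is an optimal $S$ with $u \in S \iff v \in S$. If exactly one of them, say $v$, is in $S$ but $u \notin S$, then $S'' := (S \setminus \{v\}) \cup \{\text{nothing}\}$ should still be a solution: removing $v$ from the deletion set re-adds the twin $v$ to a graph already containing its twin $u$, which by Observation~\ref{thm:2club_with_twins} preserves the 2-club-cluster property and does not increase the weight. Symmetrically if $u \in S, v\notin S$ one swaps roles (using that $v$ is not permanent). Once $u,v$ are both-in or both-out, I project to $S' := S \cap V'$ (deleting the $v$-coordinate); the weight bookkeeping matches because deleting both $u,v$ in $G$ costs $w(u)+w(v)=w'(u)$ in $G'$, and deleting neither costs nothing. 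The resulting $S'$ is a valid 2-club vertex deletion set for $G'$ with $w'(S') \le k$ and $S' \cap F' = \emptyset$.

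I expect the main obstacle to be the careful treatment of permanency together with the re-adding step. The clean statement ``$u \in S \iff v \in S$ may be assumed'' breaks if one of $u,v$ is permanent, and the rule as stated deletes $v$ and merges its weight into $u$, so one must either assume $v \notin F$ (and argue that if $v \in F$ one relabels so that the kept vertex is permanent) or verify the rule is only invoked in the allowed configuration. The other delicate point is verifying rigorously that re-adding a twin never creates a restricted~$P_4$: I would argue that any restricted~$P_4$ through $v$ in the enlarged graph would, after substituting $u$ for $v$ (legitimate since $N[u]=N[v]$ or $N(u)=N(v)$), yield a restricted~$P_4$ through $u$ in the smaller graph, contradicting that it was a 2-club cluster graph; the closed- versus open-neighborhood cases must be checked separately but both reduce to this substitution.
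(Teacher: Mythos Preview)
Your proposal is correct and follows essentially the same approach as the paper: both directions hinge on Observation~\ref{thm:2club_with_twins}, the backward direction lifts a solution by adding $v$ exactly when $u$ was deleted, and the forward direction establishes that an optimal solution deletes both twins or neither (so the weight merge is exact). You are in fact more careful than the paper's own proof, which does not discuss the interaction with the permanent-vertex set $F$ at all; your remarks about relabeling when one twin is permanent are a legitimate refinement rather than a deviation.
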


\begin{proof}
We have to show that~$(G, w, k)$ is a yes-instance if and only if~$(G', w', k)$ is a yes-instance.

\enquote{$\Leftarrow$}: 
Let~$S'$ be an optimal 2-club vertex deletion set with $ w'(S') \leq k$.
If~$S'$ removes~$u$, then~$S = S' \cup \left\{ v \right\}$ is a 2-club vertex deletion set for~$G$ with~$ w'(S') = w(S)$. Otherwise, by \cref{thm:2club_with_twins}, $S'$ is also a 2-club vertex deletion set for~$G$.

\enquote{$\Rightarrow$}: 
Let~$S$ be an optimal 2-club vertex deletion set with~$w(S)\leq k$ and~$H = G[V \setminus S]$. We claim that~$S$ either removes both~$u$ and~$v$ or neither of them. Assume without loss of generality that~$S$ only removes~$u$. Then by \cref{thm:2club_with_twins} the set~$S' = S \setminus \{u\}$ would be a 2-club vertex deletion set as~$u$ and~$v$ would be twins in~$H' = G[V \setminus S']$. Since~$S'$ is smaller than~$S$, this is a contradiction to the optimality of~$S$.
 \end{proof}

If a restricted~$P_4$~$stuv$ has to be eliminated, then one has to potentially branch into four cases. 
If some of the vertices of~$stuv$ are permanent, then this decreases this number branches. 
In the cases that three vertices of~$stuv$ are permanent, then there is only one possible branch.
\begin{rrule}\label{rr:2cvd_simple_choice}
If the graph~$G$ contains a restricted~$P_4$~$stuv$ such that only one vertex~$x \in \left\{ s,t,u,v \right\}$ is not permanent, then delete~$x$ and decrease~$k$ by~$w(x)$.
\end{rrule}

The next rule can be used to shrink some subgraphs in which all vertices are permanent.

\begin{rrule} \label{rr:2cvd_shrink_permanent}
	Let~$v$ be a vertex in~$G$ that is not permanent. 
	Let~$C$ be a component of~$G-v$ that is a 2-club and all its vertices are marked permanent.
	Let~$d$ be the maximum distance of a vertex in~$C$ to~$v$. 
	Replace~$C$ with~$d$ new permanent vertices that together with~$v$ induce a path of length~$d$.
\end{rrule}
\begin{lemma}
\label{lem:shrink-permanent}
\cref{rr:2cvd_shrink_permanent} is safe.
\end{lemma}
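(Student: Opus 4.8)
The plan is to show that Reduction Rule~\ref{rr:2cvd_shrink_permanent} preserves yes/no-instances by arguing that the replacement of the all-permanent 2-club component~$C$ by a path of permanent vertices does not change which vertices outside~$C$ any solution must delete, nor the feasibility of the resulting graph. First I would fix the instance $\mathcal{I} = (G,k,F,w)$ and the non-permanent vertex~$v$ together with the 2-club component~$C$ of~$G-v$ whose vertices are all permanent, and let $\mathcal{I}' = (G',k,F',w)$ denote the reduced instance, where $C$ is replaced by a fresh path $P = p_1 p_2 \cdots p_d$ of permanent vertices with $v$ adjacent to $p_1$ and $d = \max_{c \in C}\dist_{G}(c,v)$. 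The key structural observation is that, because all vertices of $C$ are permanent, no solution may delete any vertex of $C$; hence the only role $C$ plays is as a potential source of restricted~$P_4$'s, and such a restricted~$P_4$ can only be "resolved" by deleting a non-permanent vertex lying outside $C$ (in particular $v$ or a non-permanent vertex reachable through $v$).

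The main step is therefore to establish that the set of restricted~$P_4$'s that force the deletion of some outside vertex is the \emph{same} in $G$ and in $G'$, in the following precise sense. I would argue that, for any vertex $s \notin C$ (including $v$), the existence of a restricted~$P_4$ from $s$ into $C$ in $G$ is governed solely by the distance $\dist_G(s, v)$ combined with the distance profile $\{\dist_G(v,c) : c \in C\}$, because every shortest path from $s$ into $C$ must pass through $v$ (the only neighbor of $C$ outside $C$). Since $C$ is a 2-club, distances inside $C$ are at most two, so a restricted~$P_4$ $s\,v\,\cdot\,c$ arises exactly when $\dist_G(s,v) = 1$ and there is a vertex $c \in C$ with $\dist_G(v,c) = 2$ having no common neighbor with $s$ in $C$. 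The path gadget $P$ is constructed so that it reproduces precisely the maximum distance $d$ from $v$ into $C$, and by the 2-club property all the "hard" cases collapse to whether $d \ge 2$; thus a restricted~$P_4$ through $v$ into the gadget exists in $G'$ under exactly the same condition on the outside part. A symmetric argument handles restricted~$P_4$'s that start inside $C$, run through $v$, and end outside: these again depend only on $\dist_G(v,\cdot)$ and the outside structure, which are unchanged.

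Having matched up the restricted~$P_4$'s, I would finish by a two-direction equivalence. For the forward direction, given a solution $S$ for $\mathcal{I}$ with $S \cap F = \emptyset$, note $S \cap C = \emptyset$ by permanence, so $S$ consists of outside (and possibly $v$) vertices; the same set $S$ is a valid solution for $\mathcal{I}'$ because $G'[V' \setminus S]$ has a restricted~$P_4$ only if $G[V\setminus S]$ did, by the distance-correspondence above. The backward direction is symmetric, using that a solution $S'$ for $\mathcal{I}'$ cannot touch the permanent path $P$ and hence is again an outside set, which transfers back to $G$. Since $F \subseteq F'$ only gains the new permanent path vertices and $w$ is unchanged on the surviving vertices, the budget $k$ and the weight constraints match exactly, giving $(G,k,F,w) \in \Pi \iff (G',k,F',w) \in \Pi$.

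The step I expect to be the main obstacle is the \emph{distance-preservation} argument: I must verify carefully that replacing $C$ by a single path of length $d$ cannot accidentally \emph{create} a new restricted~$P_4$ or \emph{destroy} an existing one that the outside part depended on. The subtle point is that $C$ may offer several vertices at distance two from $v$ and these may interact with an outside vertex $s$ adjacent to $v$ in ways that a single linear path cannot replicate vertex-for-vertex; the crux is to show that, because $C$ and every candidate for a 2-club cluster must be resolved by deleting $v$ or an outside vertex anyway, only the \emph{maximum} distance $d$ matters for feasibility, so the collapse to one representative path is sound. Making this "only the farthest vertex matters" claim rigorous — and confirming it holds regardless of whether $d$ equals $1$, $2$, or more — is where the real care lies.
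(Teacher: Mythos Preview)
Your approach is correct and matches the paper's: both proofs hinge on the observation that any restricted~$P_4$ touching~$C$ must contain~$v$, so only the distances from~$v$ into~$C$ matter and a path of length~$d$ suffices as a replacement. The paper's proof is a three-sentence sketch of exactly this idea; your plan is a more explicit two-direction version of the same argument.

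One remark: the ``main obstacle'' you flag is not actually an obstacle, and your phrasing suggests a slight misconception. You write that a restricted~$P_4$ $s\,v\,\cdot\,c$ requires $c$ to have ``no common neighbor with~$s$ in~$C$,'' and you worry that multiple distance-two vertices in~$C$ might ``interact with an outside vertex~$s$'' in ways the path cannot replicate. But since~$C$ is a connected component of~$G-v$, the vertex~$v$ is the \emph{only} vertex outside~$C$ adjacent to anything in~$C$; every other outside vertex~$s$ has zero neighbours in~$C$. Hence there is no interaction to worry about: the common-neighbour condition across the cut is automatically governed by~$v$ alone, and the existence of a restricted~$P_4$ through~$v$ into~$C$ depends purely on whether some vertex of~$C$ sits at the required distance from~$v$. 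Once you use this, the ``only the farthest vertex matters'' claim becomes immediate (indeed, the path supplies a vertex at every distance~$1,\ldots,d$, and $d\le 3$ since~$C$ has diameter at most two), and the careful case analysis you anticipate collapses.
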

\begin{proof}
Let~$C$ and~$v$ be as in \cref{rr:2cvd_shrink_permanent}.
Note that the vertices in~$C$ are all marked permanent. If any vertex $u \in C$ is part of a restricted~$P_4$, then~$v$ must also be part of the restricted~$P_4$. This means that only the distance between~$u$ and~$v$ is important and the path with at most~$d+1$ vertices that starts in~$v$ is sufficient to represent all vertices that have some distance (which is at most three) to~$v$.
 \end{proof}

\paragraph*{Increasing the set of permanent vertices.}
Note that \cref{rr:2cvd_simple_choice,rr:2cvd_shrink_permanent} both require permanent vertices to trigger. 
So far, the only way to add new permanent vertices is with \cref{br:2cvd_overlap}.
Subsequently, we discuss a faster way of producing permanent vertices via data reduction rules.

For the following data reduction rule, we need the concept of a \emph{bridge vertex}.
We call a vertex~$b$ a bridge vertex if for some~$s, v \in N(b)$ there exists an induced~$P_4$~$stuv$ for some~$t,u \in V$. 
We say that~$b$ bridges~$stuv$. 

A vertex~$v \in V$ is a \emph{cut vertex} if the deletion of~$V$ increases the number of connected components.
If a 2-club $C$ becomes isolated by removing a cut vertex~$v$, then the vertices of~$C$ can be safely excluded from the solution (by marking them as permanent) under the following premise: 
If any vertex from that 2-club~$C$ was included in the solution, then it can be replaced by~$v$ to yield another solution of the same size.
This premise holds if $v$ is not a bridge vertex and if~$w(v) \leq \min_{u \in C} w(u)$. 
For an example of the application of the corresponding~\cref{rr:permanent2club} see \cref{subfig:permanent2club}.

\begin{rrule} \label{rr:permanent2club}
Given a cut vertex~$v$ in~$G$ that is not a bridge vertex and not permanent. For any component~$C$ in~$G - v$, if~$C$ is a 2-club and $w(v) \leq \min_{u \in C} w(u)$, then mark the vertices in~$C$ as permanent.

\begin{figure}[t]
	\centering
	\begin{subfigure}[c]{.45\textwidth}
		\centering
		\begin{tikzpicture}

			\def\n{6}
			\foreach \i/\t in {1/,2/,3/$a$,4/$v$,5/$b$,6/} {
				\node[alter] (A-\i) at ({360 * (\i-1) / \n}:8ex) {\t};
			}
			\foreach \i [evaluate={\j=int(mod(\i,\n)+1}] in {1,...,\n} {
				\draw[majarr] (A-\i) edge (A-\j);
			}

			\begin{scope}[shift={ (-18ex, 0) }]
				\coordinate (C-center) at (0,0);

				\foreach \i in {1,...,3} {
					\node[alter] (C-\i) at ({360 * (\i-1) / 3}:3.5ex) {};
				}
			\end{scope}

			\draw[majarr] (A-4) edge (C-1);
			\draw[majarr] (C-2) edge (C-1);
			\draw[majarr] (C-3) edge (C-1);

			\begin{pgfonlayer}{background}
				\filldraw[fill=gray!10,rounded corners=15mm, draw=black!50, dashed,semithick]
					(C-center) circle (6.5ex);
			\end{pgfonlayer}

		\end{tikzpicture}
		\caption{\cref{rr:permanent2club} can be applied on vertex~$v$. The gray area contains vertices that can be marked as permanent.}
		\label{subfig:permanent2club}
	\end{subfigure}~~~~
	\begin{subfigure}[c]{.45\textwidth}
		\centering
		\begin{tikzpicture}

			\def\n{5}
			\foreach \i/\t in {1/$v$,2/$b$,3/,4/,5/$a$} {
				\node[alter] (A-\i) at ({360 * (\i-1) / \n + 180}:6ex) {\t};
			}
			\foreach \i [evaluate={\j=int(mod(\i,\n)+1}] in {1,...,\n} {
				\draw[majarr] (A-\i) edge (A-\j);
			}

			\begin{scope}[shift={ (-16ex, 0) }]
				\coordinate (C-center) at (0,0);

				\foreach \i/\t in {1/$c$,2/,3/} {
					\node[alter] (C-\i) at ({360 * (\i-1) / 3}:3.5ex) {\t};
				}
			\end{scope}
			\begin{pgfonlayer}{background}
				\filldraw[fill=gray!10,rounded corners=15mm, draw=black!50, dashed,semithick]
					(C-center) circle (6.5ex);
			\end{pgfonlayer}

			\draw[majarr] (A-1) edge (C-1);
			\draw[majarr] (C-2) edge (C-1);
			\draw[majarr] (C-3) edge (C-1);
		\end{tikzpicture}
		\caption{\cref{rr:permanent2club} cannot be applied on vertex~$v$ because $v$ is a bridge for~$a$ and~$b$.}
		\label{subfig:nopermanent2club}
	\end{subfigure}
\caption{Examples of graphs with a cut vertex~$v$ (all vertex weights are one). 
		The gray area is a 2-club that is isolated from the rest of the graph after deleting~$v$. 
		Note that in the left graph the removal of~$v$ increases the distance of~$a$ and~$b$ to five.
		Thus no induced~$P_4$ exists between~$a$ and~$b$. 
}
\label{fig:2cvd_permanent2club_examples}
\end{figure}
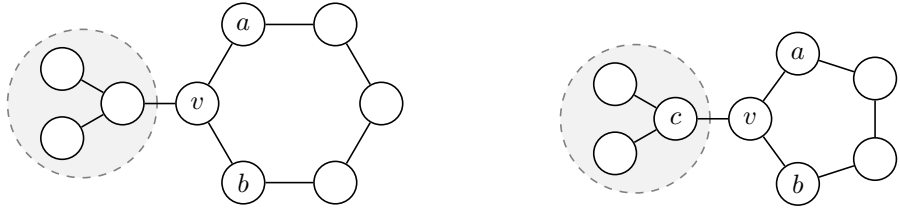
\end{rrule}
\begin{lemma}
\cref{rr:permanent2club} is safe.
\end{lemma}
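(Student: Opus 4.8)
The rule leaves $G$, $k$, and $w$ untouched and only enlarges the permanent set from $F$ to $F\cup C$, so one direction is immediate and I would dispatch it first: any 2-club vertex deletion set $S$ for $(G,k,F\cup C,w)$ satisfies $S\cap F=\emptyset$ and is therefore also a solution for $(G,k,F,w)$. The whole content lies in the converse, and the plan is an exchange argument: take a solution that deletes vertices of $C$ and reroute those deletions through the cut vertex $v$ at no additional cost.

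Concretely, I would fix a 2-club vertex deletion set $S$ for $(G,k,F,w)$ and assume $S\cap C\neq\emptyset$ (otherwise $S$ already avoids $C$ and witnesses the claim directly). Set $S':=(S\setminus C)\cup\{v\}$. The permanence and budget checks are routine: $v$ is non-permanent by the hypothesis of the rule and $v\notin C$, so $S'\cap(F\cup C)=\emptyset$; and since $S\cap C\neq\emptyset$ while each $u\in C$ satisfies $w(u)\ge w(v)$, we have $w(S\cap C)\ge w(v)$ and hence $w(S')=w(S)-w(S\cap C)+w(v)\le w(S)\le k$ (the case $v\in S$ only removes weight). It then remains to show that $S'$ is a valid 2-club vertex deletion set, i.e.\ that $G[V\setminus S']$ is a 2-club cluster graph.

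For this I would exploit that $C$ is a component of $G-v$: there are no edges between $C$ and the rest of the graph, so the only vertex outside $C$ adjacent to $C$ is $v$, and $v\in S'$. Thus in $G[V\setminus S']$ the entirely retained set $C$ induces its own connected component, which is a 2-club by hypothesis; every other connected component of $G[V\setminus S']$ lies in $(V\setminus S)\setminus(\{v\}\cup C)$ and is a connected component of $G[(V\setminus S)\setminus\{v\}]$. Hence the statement reduces to the single claim that deleting the vertex $v$ from the 2-club cluster graph $G[V\setminus S]$ again yields a 2-club cluster graph. By \cref{obs:charact-2cc-graph} this means showing that no restricted $P_4$ is created, and I expect this to be the crux and the sole place where the assumption that $v$ is \emph{not a bridge vertex} enters: deleting a vertex never creates a new induced $P_4$, so any newly restricted path $pabq$ is already an induced $P_4$ in $G[V\setminus S]$ (hence in $G$) whose endpoints had $\dist_G(p,q)=2$; as removing $v$ raises their distance to three, $v$ must have been their (unique) common neighbor, so $p,q\in N_G(v)$ and $v$ bridges $pabq$, contradicting that $v$ is not a bridge vertex.

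The main obstacle I anticipate is making this last step airtight, and it splits into two careful points: first, that a single deleted vertex can destroy a distance-two connection only if it lies on every length-two path and is therefore itself a common neighbor of the endpoints; and second, that the induced-$P_4$ property transfers cleanly between $G[V\setminus S]$, the deletion subgraph, and $G$, so that ``$v$ bridges $pabq$'' really upgrades to bridging in $G$ as required by the definition. Once these are settled, the weight bookkeeping and the component decomposition above are straightforward, and the two directions together establish safety.
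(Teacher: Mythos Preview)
Your proposal is correct and follows essentially the same exchange argument as the paper: replace the deletions $S\cap C$ by the single cut vertex $v$, and use the non-bridge assumption to argue that removing $v$ cannot create a restricted $P_4$. Your version is in fact more carefully spelled out than the paper's---you make the component decomposition of $G[V\setminus S']$ explicit and verify the transfer of the induced-$P_4$ property to $G$, whereas the paper compresses this into the phrases ``$v$ eliminates the same $P_4$'s as $F$'' and ``removing $v$ cannot contribute to the creation of a restricted $P_4$.''
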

\begin{proof}
	If~$C$ is a 2-club, then it contains no restricted~$P_4$. 
	If any vertex in~$C$ is part of a restricted~$P_4$, then~$v$ must be on this path. 
	Suppose that an optimal solution~$S$ removes some vertices~$F \subseteq C$, $F \neq \emptyset$. 
	Removing~$F$ is only needed to eliminate~$P_4$'s that start in~$C$, which also necessarily contain~$v$. 
	If~$v \in S$, then~$S$ is not optimal. 
	Otherwise, we claim that~$S' = (S \setminus F) \cup \{v\}$ is another optimal solution. 
	This holds true because~$v$ eliminates the same~$P_4$'s as~$F$, has weight~$w(v) \leq  w(F)$, and because~$v$ is not a bridge vertex. 
	Hence it follows that removing~$v$ cannot contribute to the creation of a restricted~$P_4$.
 \end{proof}
For an example why we require~$v$ to be a non-bridge vertex in \cref{rr:permanent2club}, see \cref{subfig:nopermanent2club}.

\cref{rr:permanent2club} can be slightly generalized. 
To this end, we define the \emph{robustness} of two vertices $s$ and $v$ as the number of vertices that need to be deleted before a restricted~$P_4$ can be created between~$s$ and~$v$:
$$ \robust(s, v) := \begin{cases} \infty & \text{if there is no induced~$P_4$~$stuv$ for any~$t,u \in V$,} \\ w(N(s) \cap N(v))   & \text{otherwise.} \end{cases}$$
For an example see \cref{fig:2cvd_robustness}.
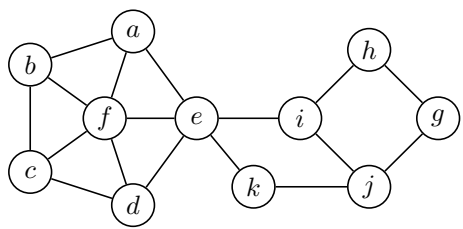
\begin{figure}[t]
	\centering
	\begin{subfigure}[c]{.45\textwidth}
		\centering
		\begin{tikzpicture}

			\def\n{5}
			\foreach \i/\t in {1/$e$,2/$a$,3/$b$,4/$c$,5/$d$} {
				\node[alter] (A-\i) at ({360 * (\i-1) / \n}:8ex) {\t};
			}
			\node[alter] (v) at (0,0) {$f$};
			\foreach \i [evaluate={\j=int(mod(\i,\n)+1}] in {1,...,\n} {
				\draw[majarr] (A-\i) edge (A-\j);
				\draw[majarr] (A-\i) edge (v);
			}

			\begin{scope}[shift={ (23ex, 0) }]
				\coordinate (C-center) at (0,0);

				\foreach \i/\t in {1/$g$,2/$h$,3/$i$,4/$j$} {
					\node[alter] (C-\i) at ({360 * (\i-1) / 4}:6ex) {\t};
				}
				\foreach \i [evaluate={\j=int(mod(\i,4)+1}] in {1,...,4} {
					\draw[majarr] (C-\i) edge (C-\j);
				}
			\end{scope}

			\node[alter, left = 6.2ex of C-4] (u) {$k$};

			\draw[majarr] (A-1) edge (u);
			\draw[majarr] (A-1) edge (C-3);
			\draw[majarr] (C-4) edge (u);
		\end{tikzpicture}
	\end{subfigure}
\caption{An example to demonstrate robustness (all vertex weights are one). For instance, we have $\robust(a, d) = 2$ because $f$ and $e$ have to be deleted to ``promote'' the induced $P_4$ $abcd$ to a restricted one. Other robustness values are: $\robust(f, j) = \robust(h, k)= 0$ and~$\robust(h, j) = \infty$. Two optimal 2-club cluster vertex deletion sets are~$\left\{  e,j \right\}$ and~$\left\{ i,k \right\}$.}
\label{fig:2cvd_robustness}
\end{figure}
An induced~$P_4$~$stuv$ is a restricted~$P_4$ if and only if~$\robust(s, v) = 0$. For the induced~$P_4$~$stuv$ the set~$U := N(s) \cap N(v)$ of vertices needs to be deleted, before~$stuv$ is \enquote{promoted} to a restricted~$P_4$. We will say that the deletion of the vertices in~$U$ \emph{contributes to the creation} of the restricted~$P_4$~$stuv$.

In \cref{rr:permanent2club} the deletion of~$v$ cannot decrease the robustness of any two vertices because~$v$ is not a bridge vertex. However, we do not want to restrict ourselves to just vertices that cannot decrease robustness. We adapt this rule to also allow~$v$ to be a bridge vertex, but we still need to guarantee that the deletion of~$v$ cannot contribute to the creation of a restricted~$P_4$. For this we consider our remaining budget~$k$ and conclude that if the robustness in the neighborhood is sufficiently high, then we can still mark the 2-clubs as permanent under the same premise that~$v$ can be deleted instead of any vertex in those 2-clubs. Additionally, we do not need to consider how removing~$v$ affects the robustness between vertices in 2-clubs that would be isolated, because we already know that they are 2-clubs and do not have restricted~$P_4$'s.

\begin{rrule} \label{rr:permanent2club_v2}
	Given a vertex~$v$ in~$G$ that is not permanent. 
	Let~$C_1, \dots, C_{\ell}$ be the components of~$G-v$ that are 2-clubs, and~$H = G - v -C_1 - \dots - C_{\ell}$. If for all pairs of vertices~$a, b \in N_G(v) \cap V(H)$~$\robust_G(a, b) > k$ and for all 2-club components $C_i$, $i \in \{1,\ldots,\ell\}$, we have $w(v) \leq \min_{u \in C_i} w(u)$, then mark all vertices in~$C_1, \dots, C_{\ell}$ as permanent.
\end{rrule}
\begin{lemma}
\label{lem:permanent2club_multiple}
\cref{rr:permanent2club_v2} is safe.
\end{lemma}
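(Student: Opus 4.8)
The plan is to follow the template of the safety proof of \cref{rr:permanent2club}, replacing the non-bridge hypothesis by the robustness budget. As with the other rules of this type, marking vertices permanent only shrinks the set of admissible solutions, so the direction ``the new instance is a yes-instance $\Rightarrow$ the old instance is a yes-instance'' is immediate: any solution for the instance with $C_1,\dots,C_\ell$ marked permanent is already a solution for the original instance. Hence I only need to show that whenever $(G,k,F,w)$ admits a 2-club vertex deletion set of weight at most $k$, it admits one that avoids $\bigcup_i C_i$.

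So I would start from an optimal solution $S$ with $w(S)\le k$ and $S\cap F=\emptyset$, and assume it deletes at least one vertex of $\bigcup_i C_i$ (otherwise $S$ itself already witnesses the new instance). I set $S':=(S\setminus\bigcup_i C_i)\cup\{v\}$. Since $S$ deletes some vertex of a component $C_i$ and the premise gives $w(v)\le \min_{u\in C_i}w(u)$, the weight freed by not deleting any $C_i$-vertex is at least $w(v)$, so $w(S')\le w(S)\le k$; moreover $v\notin F$ by hypothesis, hence $S'\cap F=\emptyset$, and $S'$ avoids $\bigcup_i C_i$ by construction.

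It then remains to verify via \cref{obs:charact-2cc-graph} that $G-S'$ is free of restricted~$P_4$'s. The structural observation is that each $C_i$ is a component of $G-v$, so all its edges to the rest of $G$ pass through $v$; as $v\in S'$ and $S'$ leaves every $C_i$ intact, each $C_i$ is an isolated, untouched 2-club component of $G-S'$ and contains no restricted~$P_4$. The remaining vertices are those of $H$, and since $V(H)\setminus S'=V(H)\setminus S$, any restricted~$P_4$ of $G-S'$ must lie inside the $H$-part on the same vertex set that $G-S$ induces there. Deleting $v$ can only raise a distance between two $H$-vertices $s,w$ from at most two to three when $v$ was a common neighbor, that is, when $s,w\in N_G(v)\cap V(H)$; for every other pair the induced~$P_4$ and the distance coincide with those in $G-S$, so such a restricted~$P_4$ would already exist in the 2-club cluster graph $G-S$, a contradiction.

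The crux -- and the step I expect to be the most delicate to phrase cleanly -- is excluding the pairs $s,w\in N_G(v)\cap V(H)$, and this is exactly where the robustness hypothesis enters. Since $v\in N_G(s)\cap N_G(w)$, we have $\robust_G(s,w)=w(N_G(s)\cap N_G(w))>k\ge w(S')$ (the case $\robust_G(s,w)=\infty$, meaning no induced~$P_4$ on $s,\dots,w$ exists in $G$ and hence none in $G-S'$, is immediate). Thus $S'$ cannot contain all common neighbors of $s$ and $w$, so at least one survives in $G-S'$, giving $\dist_{G-S'}(s,w)\le 2$ and forbidding a restricted~$P_4$ with endpoints $s,w$. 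Combining the two parts shows $G-S'$ has no restricted~$P_4$, so $S'$ is the desired solution for the instance with $C_1,\dots,C_\ell$ marked permanent, which together with the trivial direction proves that \cref{rr:permanent2club_v2} is safe.
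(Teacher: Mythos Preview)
Your proof is correct and follows essentially the same approach as the paper: construct the modified solution $S'$ by swapping the $C_i$-vertices for~$v$, then use the robustness hypothesis to rule out new restricted~$P_4$'s between neighbors of~$v$ in~$H$. Your version is in fact slightly cleaner than the paper's, since you remove $S\cap\bigcup_i C_i$ all at once rather than fixing a single~$C_i$; this avoids having to argue separately (as the paper implicitly does via optimality) that the resulting~$S'$ also avoids the other~$C_j$'s.
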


\begin{proof}
	If~$C_i$ is a 2-club, then it contains no restricted~$P_4$. 
	If any vertex in~$C_i$ is part of a restricted~$P_4$, then~$v$ must be on this path. 
	Suppose that an optimal solution~$S$ removes some vertices~$F \subseteq C_i$ with~$F \neq \emptyset$. 
	Removing~$F$ is only needed to eliminate~$P_4$'s that start in~$C_i$, which also necessarily contain~$v$. 
	If~$v \in S$, then~$S$ is not optimal. 
	Otherwise we claim that~$S' = (S \setminus F) \cup \{v\}$ is another optimal solution.
	Deleting~$v$ reduces the robustness between its neighbors. 
	However, deleting~$v$ cannot create a restricted~$P_4$. 
	If deleting~$v$ created a restricted~$P_4$, then this~$P_4$ would need to start and end in two neighbors of~$v$. 
	Deleting~$v$ cuts off the 2-clubs~$C_1, \dots, C_{\ell}$ which means no restricted~$P_4$'s were created in them, which means the neighbors of~$v$ in these 2-clubs need not be considered further. 
	The only other vertices that could be affected are those in~$U = N_G(v) \cap V(H)$. 
	Because the pairwise robustness of vertices in~$U$ is at least~$k+1$, this means that vertices with a total budget greater than $k$ need to be removed before there can be a restricted~$P_4$ that starts and ends in~$U$. 
	Because the budget~$k$ does not allow that to happen, replacing~$F$ by the single vertex~$v$ with $w(v) \leq \sum_{u\in F}w(u)$ to obtain~$S'$ results in another optimal solution.
 \end{proof}

A 2-club vertex deletion set~$S$ is clearly not optimal if a vertex~$v$ can be removed from it and~$S' = S \setminus \left\{ v \right\}$ remains a 2-club vertex deletion set.
\begin{observation} \label{obs:2cvd_optimality}
Let~$S$ be a 2-club vertex deletion set of~$G$. If~$N[v] \subseteq S$ for some~$v \in V$, then~$S \setminus \{v\}$ is also a 2-club vertex deletion set.
\end{observation}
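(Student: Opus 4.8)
The plan is to exploit the characterization of 2-club cluster graphs via restricted~$P_4$'s (\cref{obs:charact-2cc-graph}) together with the simple observation that moving from $S$ to $S' := S \setminus \{v\}$ only re-inserts $v$ into the graph. First I would note that $V \setminus S' = (V \setminus S) \cup \{v\}$, so $G[V \setminus S']$ is obtained from the 2-club cluster graph $G[V \setminus S]$ by adding back the single vertex~$v$.

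The crux is to observe that $v$ is \emph{isolated} in $G[V \setminus S']$. Since $N[v] \subseteq S$ and $v \notin N(v)$ (the graph is simple), we have $N(v) \subseteq S \setminus \{v\} = S'$, so every neighbor of~$v$ has been deleted. Hence $v$ has no neighbor in $V \setminus S'$ and forms its own connected component, consisting of a single vertex, which trivially has diameter zero and is therefore a 2-club.

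It then remains to argue that the rest of the graph is unaffected. As $G[(V \setminus S') \setminus \{v\}] = G[V \setminus S]$ is a 2-club cluster graph by the assumption that $S$ is a 2-club vertex deletion set, and since adding the isolated vertex~$v$ introduces no new edges and hence no new (restricted)~$P_4$, the graph $G[V \setminus S']$ is the disjoint union of a 2-club cluster graph with a singleton 2-club. By \cref{obs:charact-2cc-graph} it contains no restricted~$P_4$, so it is itself a 2-club cluster graph, and thus $S'$ is a 2-club vertex deletion set.

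There is essentially no obstacle here: the statement is immediate once one spots that the hypothesis $N[v] \subseteq S$ forces $v$ to become isolated upon its removal from the deletion set. The only point requiring a line of care is the simple-graph fact $v \notin N(v)$, which guarantees that deleting~$v$ from~$S$ cannot accidentally reintroduce any neighbor of~$v$.
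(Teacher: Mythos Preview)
Your proof is correct. The paper does not actually give a proof of this observation; it states it as self-evident and immediately uses it. Your argument makes explicit precisely the reason the paper treats it as obvious: removing~$v$ from~$S$ reinserts an isolated vertex, which cannot create any restricted~$P_4$.
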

One could use \cref{obs:2cvd_optimality} as a simple test whether a 2-club vertex deletion set~$S$ is minimal. Clearly we do not have to wait until we have found a 2-club vertex deletion set~$S$ to apply this test. A \emph{partial 2-club vertex deletion set}~$S'$ is a set of vertices which were removed from~$G$ along the way from the root to a branching node of the search tree. The test can be applied to~$S'$ in the same way as if it would be applied to~$S$. Additionally, if the removal of any vertex in~$G$ would cause this test to fail, then this vertex must not be removed.
\begin{rrule} \label{rr:2cvd_prevent_inoptimality}
Let~$S'$ be a partial 2-club vertex deletion set of~$G$ constructed at some stage of the branching. If for any~$v \in S'$ we have $|N(v) \setminus S'| = 1$, then mark the unique vertex~$x \in N(v) \setminus S'$ as permanent.
\end{rrule}

\begin{lemma}
	\label{lem:permanent_vertex}
	\cref{rr:2cvd_prevent_inoptimality} is safe.
\end{lemma}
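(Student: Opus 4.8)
The plan is to show that forbidding the deletion of $x$ cannot discard any optimal (indeed, any inclusion-minimal) 2-club vertex deletion set reachable through the current search-tree node, so that the overall branch\&bound procedure still returns an optimal solution. I would first dispatch the trivial direction: marking $x$ as permanent only adds a constraint, so every completion surviving the rule is still a valid 2-club vertex deletion set for the node; hence it suffices to argue that the rule does not destroy optimality.

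Concretely, let $S'$ be the partial 2-club vertex deletion set at the current node and let $v \in S'$ be the vertex triggering the rule, so that $N(v) \setminus S' = \{x\}$. I would take an arbitrary optimal 2-club vertex deletion set $S$ with $S' \subseteq S$ (i.e.\ one of the solutions reachable below this node) that additionally deletes $x$, that is $x \in S$, and derive a contradiction. Since every neighbor of $v$ lies in $S'$ or equals $x$, and both $S' \subseteq S$ and $x \in S$, we obtain $N(v) \subseteq S$; together with $v \in S' \subseteq S$ this yields $N[v] \subseteq S$.

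At this point I would invoke \cref{obs:2cvd_optimality}: because $N[v] \subseteq S$, the set $S \setminus \{v\}$ is again a 2-club vertex deletion set. As the weight function satisfies $w(v) \ge 1$, this set is strictly cheaper, contradicting the optimality of $S$. Hence no optimal solution extending $S'$ deletes $x$, which is exactly the constraint that marking $x$ as permanent imposes; the rule is therefore safe.

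The one point that requires care, and what I would flag as the main obstacle, is the branching semantics: the vertex $v$ has already been removed from the working graph, so $N(v)$ and the condition $|N(v)\setminus S'| = 1$ must be read with respect to the original graph (equivalently, the moment just before $v$ was deleted), and \emph{safe} must be understood as preserving an optimal solution reachable through this node rather than as preserving the best completion of this node in isolation. Once this bookkeeping is pinned down, the argument collapses to the short minimality observation above, and positivity of the weights is precisely what upgrades ``non-minimal'' to the strict improvement needed for the contradiction.
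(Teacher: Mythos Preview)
Your proposal is correct and follows essentially the same argument as the paper: both reduce to showing that any solution $S \supseteq S' \cup \{x\}$ satisfies $N[v] \subseteq S$ and then invoke \cref{obs:2cvd_optimality} to conclude that $S$ is not minimal (hence not optimal). Your write-up is more explicit about the bookkeeping (reading $N(v)$ in the original graph, and using $w(v)\ge 1$ to turn non-minimality into a strict cost decrease), but the underlying idea is identical.
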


\begin{proof}
	Let~$S'$ and~$x$ be as above. 
	Any 2-club vertex deletion set~$S$ with~$(S' \cup \{x\}) \subseteq S$ is not a minimal deletion set by \cref{obs:2cvd_optimality}. 
	This implies that~$x$ cannot be in any minimal solution containing $S'$ and hence we can mark $x$ as permanent.
 \end{proof}

\subsubsection{Lower Bounds.}\label{ssec:lowerbounds}
Another way to shrink the size of a search tree are \emph{lower bounds}.
A lower bound can be thought of as a function~$\ell(G)$ of the graph~$G$ such that~$\ell(G) \leq |S|$, where~$S$ is an optimal 2-club vertex deletion set for~$G$. Lower bounds are a very practical way to shrink the size of the search tree, because if once for the solution size parameter~$k$ it holds that~$k < \ell(G)$, then we know that there is no solution.

\begin{lbound}\label{lb:w_disjoint_bound}
Let~$\mathcal{P} = \left\{ p_1, \dots, p_{\ell} \right\}$ be a set of restricted~$P_4$'s in~$G$ such that each vertex~$v \in V$ is contained in at most~$w(v)$ many restricted~$P_4$'s in~$\mathcal{P}$. Then a minimum vertex deletion set for~$G$ has size at least~$\ell$.
\end{lbound}
{
\begin{proof}
Let~$\mathcal{P}$ be as above and~$v \in V$ be any vertex. Denote by~$r$ the number of~$P_4$'s in~$\mathcal{P}$ that contain~$v$, which means deleting~$v$ eliminates~$r$~$P_4$'s in~$\mathcal{P}$. The cost of deleting~$v$ is~$w(v) \geq r$. This means eliminating all~$P_4$'s in~$\mathcal{P}$ has a cost of at least~$|\mathcal{P}|$. Clearly, if a restricted~$P_4$ in~$\mathcal{P}$ is not eliminated, then the graph is not a 2-club cluster graph. 
 \end{proof}
}

Next, we exploit the size of a minimum vertex cut set. A vertex cut set~$D$ of a graph~$G$ is a set of vertices such that $G-D$ is disconnected.  
We know that an optimal 2-club vertex deletion set~$S$ for~$G$ splits it into multiple 2-clubs, which means that~$S$ must be a vertex cut set. 
However, an optimal 2-club vertex deletion set is not always a vertex cut set. 
The following lower bound overcomes this problem:

\begin{lbound}\label{lb:cut_bound}
Let~$G$ be a connected graph, let~$C$ be the maximum-weight 2-club in~$G$, and let~$D$ be the minimum-weight vertex cut set of~$G$. Then a minimum-weight 2-club vertex deletion set for~$G$ has size at least~$\min(w(V\setminus C), w(D))$.
\end{lbound}
{
\begin{proof}
Let~$S$ be an optimal 2-club vertex deletion set and let~$G' := G[V \setminus S]$ be the resulting 2-club cluster graph. If~$G'$ contains at least two components, then~$S$ is a vertex cut set for~$G$ and~$w(S) \geq w(D)$. If~$G'$ has only one component, then~$V \setminus S = V(G')$ is the maximum-weight 2-club in~$G$ and~$w(S) = w(V \setminus  C)$.
 \end{proof}
}

\section{Implementation of the Branch\&Bound Algorithm} \label{sec:implementation}

In this section, we discuss some implementation details of our algorithm for \cstrtwoclubvertexdelete{} that have been left open in \cref{app:sec:BB}. 
For example we did not say how to compute a set of restricted~$P_4$'s that can overlap in complex ways. 
It is clear that ideally we would like to find such a set whose size is maximum. 
This is likely an NP-hard problem (refer to \citet{IPS82}), which is why in a practical implementation we would rather have a fast heuristic that offers good results in most cases. 
We will now discuss heuristics used in our solver (see \cref{sec:impl-exp} for details) and other implementation challenges of interest.

\paragraph*{Determining the Minimum 2-Club Vertex Deletion Set Size.}
In order to find the minimum 2-club vertex deletion set size we simply try increasing values for the budget~$k$, as can be seen in \cref{alg:2cvd_increment_k}. 
Naturally, our solver also outputs the 2-club vertex deletion set that was found.

\begin{algorithm}[t]
	\begin{algorithmic}[1]
	\caption{Finding a minimum 2-club vertex deletion set} \label{alg:2cvd_increment_k}
	\Input{An undirected graph~$G = (V, E)$}
	\Output{The minimum 2-club vertex deletion set of~$G$}
	\\
		initialize the weight function~$w$ with~$w(v) = 1$ for each~$v \in V$ \\
		$(G', \infty, F', w'$) $\leftarrow$ apply data reduction rules to~$(G, k = \infty, F = \emptyset, w)$ \\
		$k \leftarrow \textsc{lower bound}(G', w', F')$ \\
		\textbf{while} $(G', k, F' , w')$ \text{ is a no-instance of \weightedtwoclubvertexdeleteshort{}} \\
			\hspace*{10mm} $k \leftarrow k + 1$ 
		\State \textbf{return} found solution of size~$k$
	\end{algorithmic}
\end{algorithm}

\paragraph*{\cref{br:2cvd_overlap}.} 
This is the branching rule that allows us to mark some vertices as permanent and to skip branches in which a permanent vertex would have been deleted (see \cref{ssec:search-tree}). 
It also allows us to freely choose any restricted~$P_4$ to branch on. 
It is highly advantageous to choose a~$P_4$ that contains permanent vertices because for each permanent vertex we are allowed to skip one out of a total of four branches. 
For this reason we select a restricted~$P_4$ that contains the most permanent vertices, and if there is more than one, then we select the one in which the average weight of non-permanent vertices is the highest because then on average~$k$ is decreased by a larger value in the branches and thus also making the search tree smaller.

\paragraph*{Handling multiple connected components.}
Each connected component can be solved separately. 
However, we do not know how to distribute the budget~$k$ among these components. 
As in \cref{alg:2cvd_increment_k} we try to solve each component with as little budget as possible, first trying small values for~$k$ and then increasing it by one each time. 
An improvement is to sort all components by size and when solving the last component to give it all remaining budget, which prevents us from trying many different~$k$-values for the last component. 
While this only gives an improvement by a constant factor of at most four, the effect is much more noticeable when the graph repeatedly decomposes into one large component and a few smaller ones. 
If from the root of a search tree to some leaf this happens~$i$ many times, then we have a speedup of up to~$4^i$ along those branches of the search tree.

\paragraph*{Disjoint restricted $P_4$'s.} 
\cref{rr:w2cvd_k_disjoint} allows us to delete a vertex~$v$ if there is a set~$\mathcal{P}$ of~$k+1$ restricted~$P_4$'s that each contain~$v$, but otherwise each vertex~$u$ can only be present in at most~$w(u)$ many restricted~$P_4$'s. 
We would like to find a maximum set~$\mathcal{P}$ and then test if its size is at least~$k+1$. 
However, this proved to be quite challenging. 
For our implementation we use a heuristic that does not guarantee finding a maximum set.

We focus on finding a maximum set that only contains those restricted~$P_4$'s that start in vertex~$v$. 
This means we do not try to find~$P_4$'s where~$v$ might be in the \enquote{middle} (see \cref{fig:k_overlapping_p4}). 
This can then be modeled as a simple maximum flow problem. 
The algorithm is described in \cref{alg:kplus_flow}.
\\
\begin{algorithm}[t]
	\begin{algorithmic}[1]
		\caption{Heuristic for \cref{rr:w2cvd_k_disjoint}} \label{alg:kplus_flow}
		\Input{A graph~$G = (V, E)$, with a weight function~$w$, a vertex~$v$ and~$k \in \NN$}
		\Output{\texttt{true} if \cref{rr:w2cvd_k_disjoint} can be applied to remove~$v$ from~$G$}
		\\
			Create a directed graph~$G_{\text{flow}}$ containing only the vertex~$s$ and~$t$ \\
			\textbf{for each} $i \in \left\{ 1, 2, 3 \right\}$ \Comment{Split vertices into layers based on distance}\\
				\hspace*{10mm} $D_i \leftarrow $ all vertices with distance $i$ to $v$  in $G$ \\
			\textbf{for each} $u \in D_1 \cup D_2 \cup D_3$ \Comment{Limit flow through a vertex}\\
				\hspace*{10mm} add the vertex $u_{\text{in}}$ and $u_{\text{out}}$ to $G_{\text{flow}}$ \\
				\hspace*{10mm} add the edge $(u_{\text{in}}, u_{\text{out}})$ with a capacity of $w(u)$ to $G_{\text{flow}}$ \\
			\textbf{for each} $u \in D_1$ \Comment{Connect layer 1 to source}\\
				\hspace*{10mm} add the edge $(s, u_{\text{in}})$ with infinite capacity to $G_{\text{flow}}$ \\
			\textbf{for each} $i \in \left\{ 1, 2 \right\}$ \Comment{Connect the layers}\\
				\hspace*{10mm} \textbf{for each} $u \in D_i$ \\
				\hspace*{10mm} \hspace*{10mm} \textbf{for each} $x \in D_{i+1}$ with $\left\{ u, x \right\} \in E$ \\
				\hspace*{10mm} \hspace*{10mm} \hspace*{10mm} add the edge $(u_{\text{in}}, x_{\text{out}})$ with infinite capacity to $G_{\text{flow}}$ \\
			\textbf{for each} $u \in D_3$ \Comment{Connect layer 3 to sink}\\
				\hspace*{10mm} add the edge $(u_{\text{out}}, t)$ with infinite capacity to $G_{\text{flow}}$ \\
			\smallskip
			$f \leftarrow $ maximum $s$-$t$ flow in $G_{\text{flow}}$
			\If{$f > k$}
				\textbf{return} \texttt{true}
			\EndIf
			\State \textbf{return} \texttt{false}
	\end{algorithmic}
\end{algorithm}
Because for a restricted~$P_4$~$stuv$ the distance from~$v$ to~$u$,~$t$, and~$s$ is one, two and three, respectively, we partition the vertices in the graph into three sets~$D_1, D_2, D_3$. 
No restricted~$P_4$ can contain two vertices from the same~$D_i$, which is why our flow graph only contains edges from~$D_i$ to~$D_{i+1}$. 
We make sure that a vertex~$u$ is part of at most~$w(u)$ many~$P_4$'s by splitting it into two vertices connected by an edge with~$w(u)$ capacity. 
As a result there can be only flow along paths of type~$su_1^{\text{in}}u_1^{\text{out}}u_2^{\text{in}}u_2^{\text{out}}u_3^{\text{in}}u_3^{\text{out}}t$, and sending a flow of value 1 along that path means adding the restricted~$P_4$~$vu_1u_2u_3$ to~$\mathcal{P}$. 
The final maximum flow in~$G_{\text{flow}}$ does not uniquely identify a set~$\mathcal{P}$; however, the value of the maximum flow tells us the size of all maximum size~$\mathcal{P}$'s. 
Because we are only interested in the size of the set~$\mathcal{P}$ this is all we need.

\paragraph*{Disjoint restricted $P_4$'s lower bound.} 
Here we use a set of restricted~$P_4$'s~$\mathcal{P}$ such that each vertex~$v$ is present in at most~$w(v)$ many restricted~$P_4$'s. 
The size of this set is then the lower bound. 
We compute the set~$\mathcal{P}$ using a greedy heuristic. 
Each vertex has a counter initialized with the value of its weight. 
This counter keeps track of how many times this vertex can be used in a restricted~$P_4$. 
We iterate over all vertices in~$V$ by increasing degree and for each vertex~$s$ we look for a restricted~$P_4$~$stuv$ such that for all four vertices of this~$P_4$ their counter is positive. 
The restricted~$P_4$ is not chosen randomly, but rather we select a~$P_4$ that minimizes the sum of degrees of its vertices. 
Finding such a~$P_4$ takes~$\bigO(n+m)$ time. 
The~$P_4$~$stuv$ is then implicitly added to~$\mathcal{P}$ by decrementing the counter for~$s,t,u$ and~$v$ by one. 
If the counter for~$s$ remains positive, then we repeat this step and search for another~$P_4$.

The reason for minimizing the sum of the degrees is that selecting a~$P_4$ which contains many high-degree vertices might overlap with and therefore likely exclude many other restricted~$P_4$'s (see \cref{fig:lb_good_heuristic}).

\begin{figure}[t]
	\centering
	\begin{subfigure}[t]{.45\textwidth}
		\centering
		\begin{tikzpicture}
			\node[alter] (p11) at (0,0) {};
			\node[alter, below = 3ex of p11] (p12) {};
			\node[alter, below = 3ex of p12] (p13) {};
			\node[alter, below = 3ex of p13] (p14) {};

			\node[alter, right = 3ex of p11] (p21) {};
			\node[alter, below = 3ex of p21] (p22) {};
			\node[alter, below = 3ex of p22] (p23) {};
			\node[alter, below = 3ex of p23] (p24) {};

			\node[alter, right = 3ex of p21] (p31) {};
			\node[alter, below = 3ex of p31] (p32) {};
			\node[alter, below = 3ex of p32] (p33) {};
			\node[alter, below = 3ex of p33] (p34) {};

			\node[alter, right = 3ex of p31] (p41) {};
			\node[alter, below = 3ex of p41] (p42) {};
			\node[alter, below = 3ex of p42] (p43) {};
			\node[alter, below = 3ex of p43] (p44) {};

			\draw[majarr] (p11) edge (p12) (p12) edge (p13) (p13) edge (p14);
			\draw[majarr] (p21) edge (p22) (p22) edge (p23) (p23) edge (p24);
			\draw[majarr] (p31) edge (p32) (p32) edge (p33) (p33) edge (p34);
			\draw[majarr] (p41) edge (p42) (p42) edge (p43) (p43) edge (p44);

			\draw[majarr] (p12) edge (p22) (p22) edge (p32) (p32) edge (p42);

			\begin{pgfonlayer}{background}
				\draw[rounded corners=1mm, draw=gray, opacity=0.2, line width=22pt, line cap=round]
					(p12.center) -- (p22.center) -- (p32.center) -- (p42.center);
			\end{pgfonlayer}
		\end{tikzpicture}
	\end{subfigure}
	\begin{subfigure}[t]{.45\textwidth}
		\centering
		\begin{tikzpicture}
			\node[alter] (p11) at (0,0) {};
			\node[alter, below = 3ex of p11] (p12) {};
			\node[alter, below = 3ex of p12] (p13) {};
			\node[alter, below = 3ex of p13] (p14) {};

			\node[alter, right = 3ex of p11] (p21) {};
			\node[alter, below = 3ex of p21] (p22) {};
			\node[alter, below = 3ex of p22] (p23) {};
			\node[alter, below = 3ex of p23] (p24) {};

			\node[alter, right = 3ex of p21] (p31) {};
			\node[alter, below = 3ex of p31] (p32) {};
			\node[alter, below = 3ex of p32] (p33) {};
			\node[alter, below = 3ex of p33] (p34) {};

			\node[alter, right = 3ex of p31] (p41) {};
			\node[alter, below = 3ex of p41] (p42) {};
			\node[alter, below = 3ex of p42] (p43) {};
			\node[alter, below = 3ex of p43] (p44) {};

			\draw[majarr] (p11) edge (p12) (p12) edge (p13) (p13) edge (p14);
			\draw[majarr] (p21) edge (p22) (p22) edge (p23) (p23) edge (p24);
			\draw[majarr] (p31) edge (p32) (p32) edge (p33) (p33) edge (p34);
			\draw[majarr] (p41) edge (p42) (p42) edge (p43) (p43) edge (p44);

			\draw[majarr] (p12) edge (p22) (p22) edge (p32) (p32) edge (p42);

			\begin{pgfonlayer}{background}
				\draw[rounded corners=1mm, draw=gray, opacity=0.2, line width=22pt, line cap=round]
					(p11.center) -- (p12.center) -- (p13.center) -- (p14.center);
				\draw[rounded corners=1mm, draw=gray, opacity=0.2, line width=22pt, line cap=round]
					(p21.center) -- (p22.center) -- (p23.center) -- (p24.center);
				\draw[rounded corners=1mm, draw=gray, opacity=0.2, line width=22pt, line cap=round]
					(p31.center) -- (p32.center) -- (p33.center) -- (p34.center);
				\draw[rounded corners=1mm, draw=gray, opacity=0.2, line width=22pt, line cap=round]
					(p41.center) -- (p42.center) -- (p43.center) -- (p44.center);
			\end{pgfonlayer}
		\end{tikzpicture}
	\end{subfigure}
	\caption{
		Two maximal sets of disjoint restricted~$P_4$'s in the same graph. 
		Each~$P_4$ is represented by a gray path. 
		The restricted~$P_4$ on the left contains the highest degree vertices, each of which could have been in its own disjoint restricted~$P_4$ like on the right. 
		Our heuristic for \cref{lb:w_disjoint_bound} (see \cref{ssec:lowerbounds}) prevents such a bad case.
	}
\label{fig:lb_good_heuristic}
\end{figure}
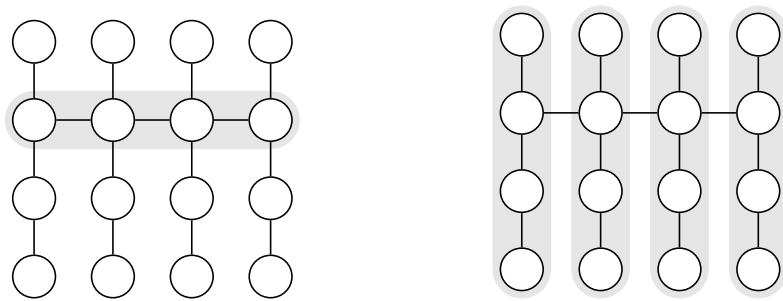

\section{Experimental Evaluation} \label{sec:impl-exp}

In this section, we present experimental results for our \cstrtwoclubvertexdelete{} solver.

\subsection{Setup}

We implemented our branch\&bound algorithm (see \cref{sec:algorithms}) for \weightedtwoclubvertexdelete{} in C++ (we use the algorithm to solve \twoclubvertexdelete{}).\footnote{The source code is available at \url{https://fpt.akt.tu-berlin.de/software/two-club-editing/two-club-vertex-deletion.zip} and includes the source code for the ILP formulation using CPLEX.} 
This solver (called \texttt{solverALL} in the following) computes a 2-club vertex deletion set size of minimum cost and outputs the solution set. 
It uses all data reduction rules described in \cref{sect:2cvd_rrules} and \cref{lb:w_disjoint_bound}.\footnote{Our implementation of \cref{lb:cut_bound} was far too slow to be of use.}
Note that for the implementation of some data reduction rules and lower bounds we use heuristics; see \cref{sec:implementation} for details.
Due to the relatively high running times these data reduction rules are not applied exhaustively, but rather applied once to each vertex.
We will compare the performance of our solver against the ILP formulation from \cref{sec:algorithms} solved using CPLEX (we will refer to this solver as \texttt{CPLEX}). 
All experiments were run on a machine with an Intel Xeon W-2125 8-core, 4.0 GHz CPU and 256GB of RAM running Ubuntu 18.04. 
We used a recent version of CPLEX, 12.8, for our experiments.
We use mostly default parameters and only set \texttt{mip tolerances mipgap} and \texttt{absmipgap} to zero and enabled \texttt{emphasis numerical}. 
CPLEX can use up to 32 threads by default. 
Even though we had 8 cores available, in our experiments CPLEX usually only used four. 
This is an advantage of CPLEX, because our solver was only written to use a single thread.
Our solver only needs up to 20MB of RAM, but we have seen CPLEX to use even 30GB of RAM.
For the running time measurements of our solver and CPLEX we used wallclock time. 
For running time measurements of CPLEX we excluded the time it takes to build the ILP model, which can have~$\bigO(n^4)$ constraints. 
For instances with 250 vertices this process can take 20 seconds, and sometimes even 60. 
However, in the vast majority of cases, the build time was at most 30\% of the total running time.

\subsection{Dataset}\label{dataset}
For our analysis we used a real-world biological dataset\footnote{The dataset is available at \url{https://bio.informatik.uni-jena.de/data/\#cluster\_editing\_data}} that has been used for the evaluation of \textsc{Weighted Cluster Editing} solvers \cite{Rha+07,Boc+09}. 
The vertices in the graphs represent protein sequences and between each vertex there is an edge whose weight represents some sort of similarity of the proteins. 
The edge weights can be positive or negative.

A graph with weighted edges does not match the input of \twoclubvertexdelete{}. 
\citet{HH15} used the following conversion for their (unweighted) \textsc{Cluster Editing} solver: first sort the edges by descending weight, keep the first~$c$\% of edges for some~$c \in [0, 100]$ and discard their weight. 
We additionally delete degree zero vertices from the graphs. 
\citet{HH15} used the values~$c=33$, $c=50$ and~$c=66$, which is also what we did, and obtained three datasets, which we will refer to as Bio33, Bio50 and Bio66, respectively. 
Our experiment results for bio66 are fairly similar to Bio50, which is why we will only discuss results for Bio33 and Bio50.

The Bio33 and Bio50 datasets each contain 3964 instances. See \cref{fig:bio_dataset_size}
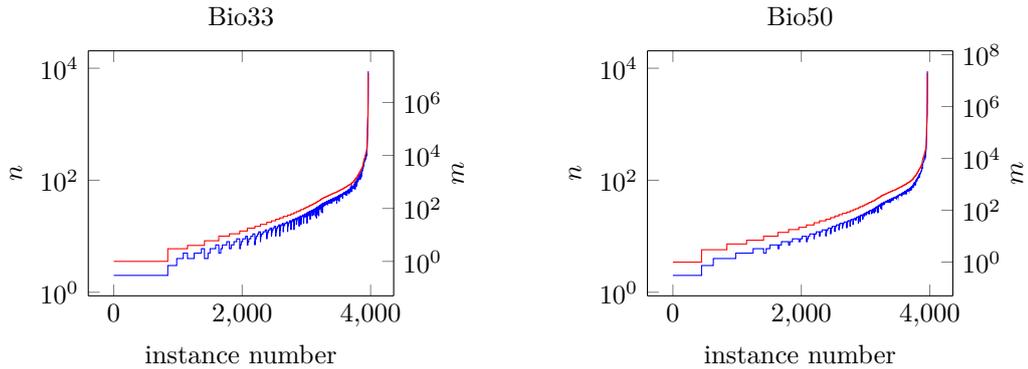
\begin{figure}[t]
	\centering
	\begin{subfigure}[t]{.4\textwidth}
		\centering
		\begin{tikzpicture}[scale=1]
			\begin{axis}[
						width=\textwidth,
						xlabel={instance number},
						ylabel={$n$},
						ylabel near ticks,
						title={Bio33},
						axis y line*=left,
						legend cell align=left,
						legend pos=north west,
						ymode=log
				]
				\addplot[mark=none, blue] table[col sep=comma,y={n}, x={idx}] {instances_bio33.csv};
			\end{axis}
			\begin{axis}[
						width=\textwidth,
						axis y line*=right,
						axis x line=none,
						ylabel={$m$},
						ylabel near ticks,
						ylabel shift = -3pt,
						ytick={1, 100, 10000, 1000000},
						ymode=log,
						ymin=0.05
				]
				\addplot[mark=none,  red] table[col sep=comma,y={m}, x={idx}] {instances_bio33.csv};
			\end{axis}
		\end{tikzpicture}%
	\end{subfigure}
	$\qquad$~$\qquad$
	\begin{subfigure}[t]{.4\textwidth}
		\centering
		\begin{tikzpicture}[scale=1]
			\begin{axis}[
						width=\textwidth,
						xlabel={instance number},
						ylabel={$n$},
						ylabel near ticks,
						title={Bio50},
						axis y line*=left,
						legend cell align=left,
						legend pos=north west,
						ymode=log
				]
				\addplot[mark=none, blue] table[col sep=comma,y={n}, x={idx}] {instances_bio50.csv};
			\end{axis}
			\begin{axis}[
						width=\textwidth,
						axis y line*=right,
						axis x line=none,
						ylabel={$m$},
						ylabel near ticks,
						ylabel shift = -3pt,
						ytick={1, 100, 10000, 1000000, 100000000},
						ymode=log,
						ymin=0.05
				]
				\addplot[mark=none,  red] table[col sep=comma,y={m}, x={idx}] {instances_bio50.csv};
			\end{axis}
		\end{tikzpicture}%
	\end{subfigure}
	\caption{
		Two graphs showing the number of vertices~$n$ in blue (lower line)  and the number of edges~$m$ in red (upper line) in the Bio33 (left) and Bio50 (right) instances (on a log-scale). 
		The instance numbers for the Bio33 and Bio55 instances were selected such that the number of edges increases with the instance number. 
		There are about 15 instances with more than 500 vertices, the largest of which has nearly 9000 vertices.}
	\label{fig:bio_dataset_size}
\end{figure}
for the number of vertices and edges in the instances. 
The \enquote{noise} in the number of vertices is a result of deleting degree-zero vertices from the graphs. 
In \cref{fig:bio_dataset_size} we can see that these datasets contain many instances with less than 50 vertices and a few with around 8000 vertices. 
Our results show that the vast majority of instances with less than 100 vertices can be solved within less than a second. 
For this reason we focused on the harder instances. 
From each dataset we only kept instances with 50--250 vertices. %
After this filtering Bio33 contains 430~instances, whereas Bio50 contains 446 instances.

\subsection{Results}

We next analyze the performance of our solver in detail.
To this end, we start with comparing the theoretical bounds with the results of our experiments.
As can be seen in \cref{fig:impact-n-k}, the number of branches in our search tree is far below the theoretical worst case bound of~$4^k$ (even far below the $3.31^k$ bound of the search tree of Liu 
et al.~\cite{LZZ12}) given in \cref{prop:4^k-searchtree}.
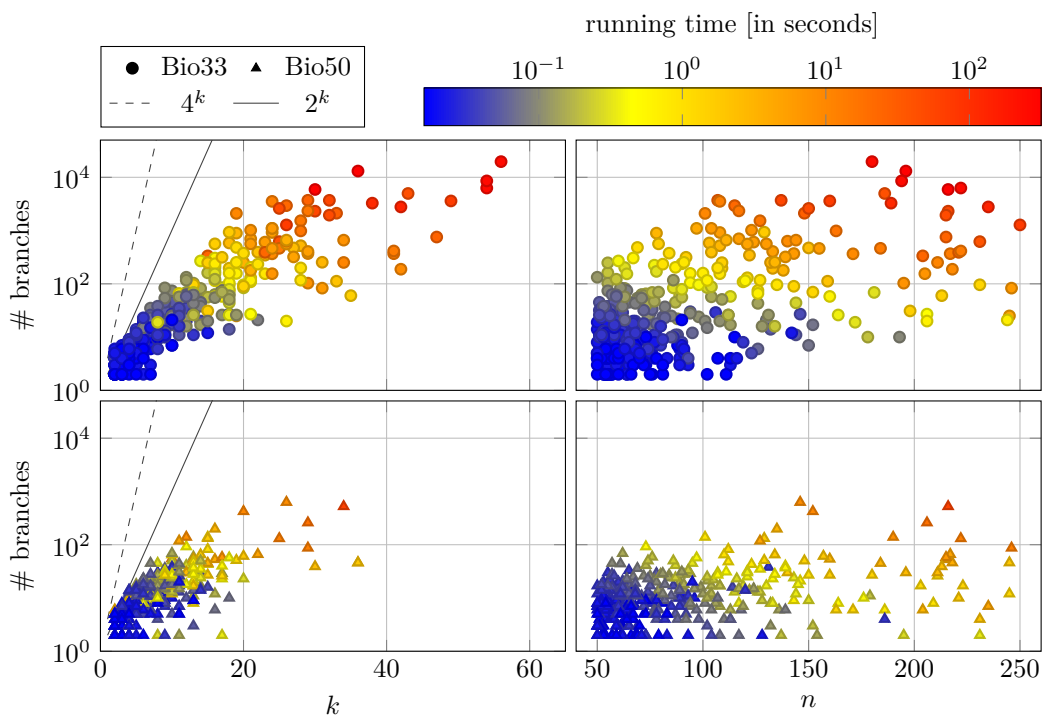
\begin{figure}[!t]
	\centering
	\begin{tikzpicture} 
		\begin{groupplot}[ 
			group style={ 
				group name=my plots, 
				group size=2 by 2, 
				xlabels at=edge bottom, 
				ylabels at=edge left, 
				xticklabels at=edge bottom, 
				yticklabels at=edge left, 
				vertical sep=4pt,
				horizontal sep=4pt
			}, 
			ymode=log,
			grid,
			point meta max=2.5,
			point meta min=-1.8,
			xmin=0,
			xmax=260,
			ymin=1,
			ymax=50000,
			width=0.55\hsize, 
			height=0.35\hsize, 
			ylabel={\# branches},
			cycle multiindex* list = {scatter src=explicit \nextlist scatter \nextlist only marks}, %
		] 
		\nextgroupplot[xmax=65,legend columns=2,
					legend style={
						at={(0,1.05)},
						anchor=south west,
					}]
			\addplot+[mark=*,thick] table[col sep=comma,y={default___steps},x={default___k},meta expr=lg10(\thisrow{default___runtime} + 0.001)] {big_benchmark33.csv};
			\addlegendentry{Bio33}
			\addlegendimage{black,mark=triangle*, only marks}
			\addlegendentry{Bio50}

			\addplot[dashed,color=black!75,domain=1:10,samples=40] {4^x};
			\addlegendentry{$4^k$}
			\addplot[color=black!75,domain=1:20,samples=40] {2^x};
			\addlegendentry{$2^k$}
			\coordinate (top) at (rel axis cs:0,1);%

		\nextgroupplot[xmin=40] 
			\addplot+[mark=*,thick] table[col sep=comma,y={default___steps},x={n},meta expr=lg10(\thisrow{default___runtime} + 0.001)] {big_benchmark33.csv};
		
		\nextgroupplot[xlabel={$k$},xmax=65]
			\addplot+[mark=triangle*,thick] table[col sep=comma,y={default___steps},x={default___k},meta expr=lg10(\thisrow{default___runtime} + 0.001)] {big_benchmark50.csv};
			\addplot[dashed,color=black!75,domain=1:10,samples=40] {4^x};
			\addplot[color=black!75,domain=1:20,samples=40] {2^x};
		
		\nextgroupplot[xlabel={$n$},xmin=40]
			\addplot+[mark=triangle*,thick] table[col sep=comma,y={default___steps},x={n},meta expr=lg10(\thisrow{default___runtime} + 0.001)] {big_benchmark50.csv};
			\coordinate (bot) at (rel axis cs:1,0);%
		\end{groupplot}
		
		\begin{axis}[%
			hide axis,
			scale only axis,
			height=.001\hsize,
			width=0.58\hsize,
			at={(top -| bot)},
			yshift=28pt,
			anchor=south east,
			point meta max=2.5,
			point meta min=-1.8,
			colorbar horizontal,                  %
			colorbar style={
				xtick={-2,-1,0,1,2},
				scaled ticks= true,
				xticklabel pos=upper,
				xlabel={running time [in seconds]},
				xticklabel={ $10^{\pgfmathprintnumber{\tick}}$},
			},
			]
			\addplot [draw=none] coordinates {(0,0)};
		\end{axis}
	\end{tikzpicture} 
	\caption{
		Diagrams illustrating the impact of~$k$ and~$n$ on the running time and number of branches (that is, number of times the branching function is called).
		All four diagrams use the same scale on the $y$-axis.
		The diagrams on top of each other also use the same $x$-axis.
		The values on the $x$-axis are the optimal solution size (left two diagrams) and number of input graph vertices (right two diagrams).
		The top two diagrams show the results for the Bio33-instances; the bottom two for the Bio50 instances.
	}
	\label{fig:impact-n-k}
\end{figure}
This is a clear indication the the data reduction rules and lower bounds have a strong impact in our solver.
Another observation derived from \cref{fig:impact-n-k} is that the impact of the number of input graph vertices on the running time is quite significant. 
The reason for this is the high polynomial running time for computing the data reduction rules and lower bounds:
Our best upper bound on the running time (in terms of~$n$) of one recursive step (including data reduction and lower bounds) is~$O(n^4)$.
One of the bottlenecks in the running time is \cref{rr:w2cvd_k_disjoint}, where we solve up to~$n$ maximum flow instances.
We show subsequently that the high running-time cost for the data reduction rules is justified. 

\cref{fig:impact-n-k} also displays that the Bio33 instances are in general harder for our solver than the Bio50 instances.
The reason for this is that the Bio50 instances are more dense and allow to cluster in less 2-clubs of larger size with fewer vertex removals.

\subparagraph*{Comparisons.}
We next compare our solver \texttt{solverALL} to several variants of it where we deactivate key features and to \texttt{CPLEX}.
The comparisons are illustrated in \cref{fig:comparisons}.
\newcommand{\compareGroupPlotGeneral}[4]{
	\nextgroupplot[#3, ylabel={#1}]
		\addplot+ table[col sep=comma,y={#2___runtime}, x={default___runtime}] {big_benchmark33.csv};
		\addplot[color=black,domain=\minValueRun:\maxValueRun,samples=4] {x};
		\addplot[dashed,color=black!75,domain=\minValueRun:\maxValueRun,samples=4] {5*x};
		\addplot[dashed,color=black!75,domain=\minValueRun:\maxValueRun,samples=4] {0.2*x};
		\addplot[dotted,color=black,domain=\minValueRun:\maxValueRun,samples=4] {25*x};
		\addplot[dotted,color=black,domain=\minValueRun:\maxValueRun,samples=4] {0.04*x};
		\addplot[color=red,mark=none] coordinates {(\minValueRun, 3600) (\maxValueRun, 3600)};
		\addplot[color=red,mark=none] coordinates {(3600, \minValueRun) (3600, \maxValueRun)};

	\nextgroupplot[#4]
		\addplot+ table[col sep=comma,y={#2___runtime}, x={default___runtime}] {big_benchmark33.csv};
		\addplot[color=black,domain=\minValueRun:\maxValueRun,samples=4] {x};
		\addplot[dashed,color=black!75,domain=\minValueRun:\maxValueRun,samples=4] {5*x};
		\addplot[dashed,color=black!75,domain=\minValueRun:\maxValueRun,samples=4] {0.2*x};
		\addplot[dotted,color=black,domain=\minValueRun:\maxValueRun,samples=4] {25*x};
		\addplot[dotted,color=black,domain=\minValueRun:\maxValueRun,samples=4] {0.04*x};
		\addplot[color=red,mark=none] coordinates {(\minValueRun, 3600) (\maxValueRun, 3600)};
		\addplot[color=red,mark=none] coordinates {(3600, \minValueRun) (3600, \maxValueRun)};
}%
\newcommand{\compareGroupPlotTop}[2]{\compareGroupPlotGeneral{#1}{#2}{title={Bio33}}{title={Bio50}}}%
\newcommand{\compareGroupPlot}[2]{\compareGroupPlotGeneral{#1}{#2}{}{}}%
\begin{figure}[t!]
	\centering
	\def\maxValueRun{10000}
	\def\minValueRun{0.0001}

	\begin{tikzpicture}
		\begin{groupplot}[ 
				group style={ 
					group name=my plots, 
					group size=2 by 4, 
					xlabels at=edge bottom, 
					ylabels at=edge left, 
					xticklabels at=edge bottom, 
					yticklabels at=edge left, 
					vertical sep=4pt,
					horizontal sep=4pt
				}, 
				ymode=log,
				grid,
				xmode=log,
				xmin=0.005,
				xmax=6000,
				ymin=0.005,
				ymax=6000,
				ytick={0.01,0.1,1,10,100,1000},
				xtick={0.01,0.1,1,10,100,1000},
				width=0.52\hsize, 
				height=0.35\hsize, 
				xlabel={\texttt{solverALL} [s]},
				cycle multiindex* list = {only marks \nextlist mark=x}, 
			] 
			\compareGroupPlotTop{w/o reduction rules}{lb_disjoint_no_rrules}

			\compareGroupPlot{w/o permanent}{no_permanent_some_rrs}

			\compareGroupPlot{w/o lower bound}{default_no_lb}
			
			\compareGroupPlot{\texttt{CPLEX}}{2cvd_cplex}
		\end{groupplot}
	\end{tikzpicture} 
	\caption{Running time comparison (in seconds) of different configurations of our solver and CPLEX on two datasets (left: Bio33, right: Bio55).
		Each dot represents one instance with the~$x$ and~$y$ coordinates indicating the running time of the respective solver.
		Hence, a dot above (below) the solid diagonal indicates the solver on the $x$-axis ($y$-axis) is faster on the corresponding instance.
		The diagonal lines mark running time factors of~$1$ (solid), $5$ (dashed) and~$25$ (dotted). 
		Dots on the solid horizontal and vertical red lines (at 3600 seconds) indicate timeouts.
		In each plot the running time of our \texttt{solverALL} is displayed at the $x$-axis.
		The $y$-axis shows in each row of the plots a different solver; these are from top to bottom:
		Three configurations of our solver where certain features are disabled (first all data reduction rules, then permanent vertices with the corresponding data reduction rules that require permanent vertices (\cref{rr:2cvd_shrink_permanent,rr:permanent2club,rr:permanent2club_v2,rr:2cvd_prevent_inoptimality}), finally without \cref{lb:w_disjoint_bound}).
		The last row shows the comparison against CPLEX.
	}
	\label{fig:comparisons}
\end{figure}
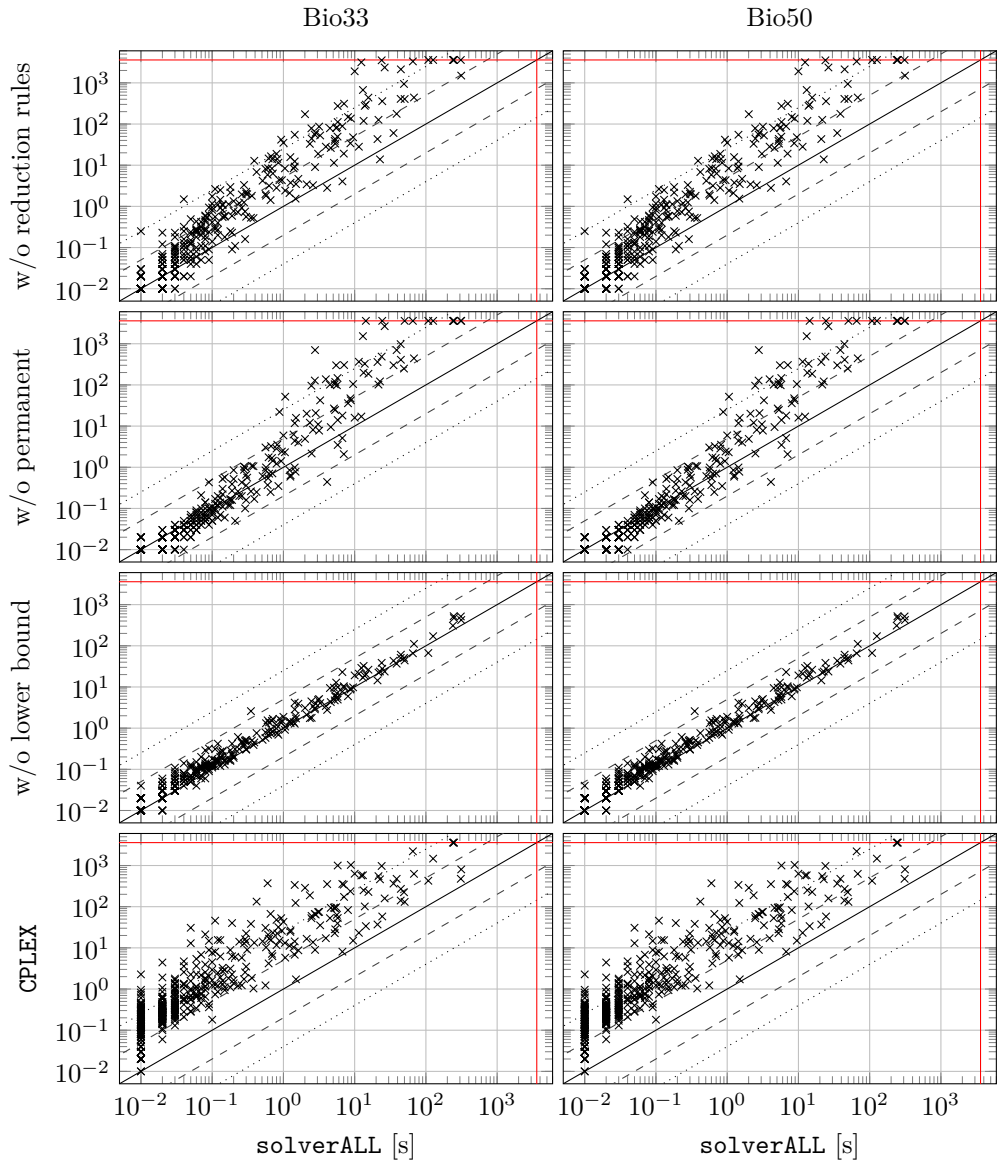
The first row of plots in \cref{fig:comparisons} shows that if we deactivate the data reduction rules, then the performance becomes much worse, especially on the harder instances that require more than 10 seconds to solve.
On average, \texttt{solverALL} (with all data reduction rules) is 6.7 times faster on the Bio33 instances and 3 times faster on the Bio50 instances.
This is in stark contrast to the kernelization lower bound given in \cref{thm:nopolykernel} and gives hope to find small parameters based on which one may perform a mathematical analysis yielding polynomial kernel sizes.

The plots in the second row of \cref{fig:comparisons} show the effect of turning off permanent vertices and the corresponding data reduction rules (\cref{rr:2cvd_shrink_permanent,rr:permanent2club,rr:permanent2club_v2,rr:2cvd_prevent_inoptimality}).
Note that in the Bio50 dataset the variant without permanent vertices is faster on most instances, very likely due to \cref{rr:permanent2club_v2} being an expensive data reduction rule. 
However, the results for the Bio33 dataset show a different picture.
In fact, one can see in both data sets that turning off the feature of permanent vertices solves the easier instances even faster and slows down the solver on the harder instances.
The lack of ``hard'' instances in the Bio50 dataset (see also \cref{fig:impact-n-k}) is the reason for the variant without permanent vertices being faster there.
On average, \texttt{solverALL} (with permanent vertices) is 5 times faster on the Bio33 instances but 1.6 times slower on the Bio50 instances.

The plots in the third row of \cref{fig:comparisons} show that the impact of \cref{lb:w_disjoint_bound} is much smaller than the impact of the data reduction rules and the permanent vertices.
While the \texttt{solverALL} is faster on most instances, the gap does not improve for the harder instances as in the previous two comparisons (see row one and two in \cref{fig:comparisons}).
On average, \texttt{solverALL} (with lowerbounds) is 1.4 times faster on the Bio33 instances and 1.5 times faster on the Bio50 instances.
The plots in the last row of \cref{fig:comparisons} show that our solver is almost always faster than CPLEX by a factor of 5--25 for Bio33 and a factor of 25--100 for Bio50. 
On average, \texttt{solverALL} is 29.3 times faster on the Bio33 instances and 103.6 times faster on the Bio50 instances.
For Bio33 it appears that for harder instances CPLEX is not much slower than our solver.
On Bio50, CPLEX does very poorly compared to our solver. 
This is likely due to the minimum 2-club vertex deletion set size (the parameter in our FPT-algorithm) on these graphs being smaller than for Bio33. %
Moreover, the process for building the ILP model for CPLEX is usually fairly fast, but for larger instances it can take up to 60 seconds. 
For example, in Bio50 there is a graph with 205 vertices and 10455 edges which is already a 2-club cluster graph. 
It takes about 50 seconds to create the ILP model, and when exported to a file it takes up to 1.6GB (uncompressed) and includes 5.8 million constraints, whereas the original graph only takes up 72kB stored in an edge list format.

Summarizing, our solver outperforms a standard ILP-formulation solved with CPLEX.
Moreover, good data reduction rules are crucial to the practical performance of our solver.

\subparagraph*{2-Club Cluster Vertex Deletion Solutions.}
In \cref{fig:cvd_vs_2cvd} (top row) we compare the sizes of a minimum cluster vertex deletion set (CVD) and a minimum 2-club vertex deletion set (2CVD) on our datasets. 
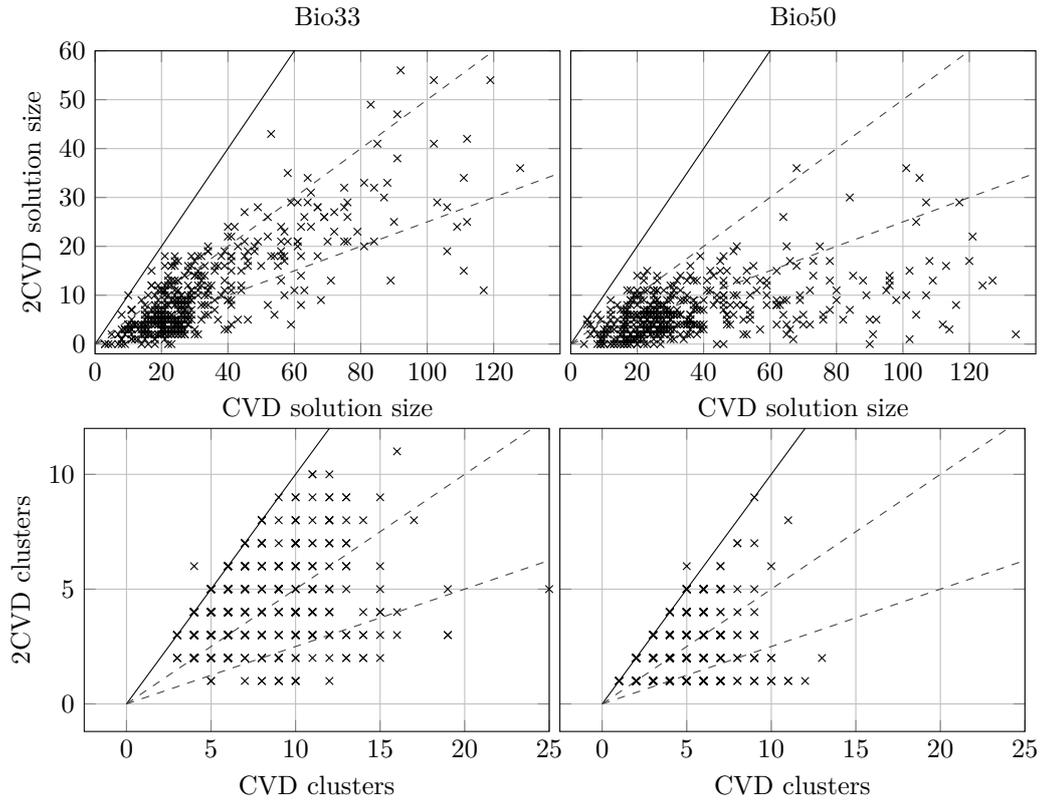
\begin{figure}[t]
	\centering
	\def\maxValueRun{200}
	\def\minValueRun{0}
	\begin{tikzpicture}
		\begin{groupplot}[ 
				group style={ 
					group name=my plots, 
					group size=2 by 2, 
					ylabels at=edge left, 
					yticklabels at=edge left, 
					horizontal sep=4pt
				}, 
				grid,
				xmin=0,
				xmax=140,
				ymin=-2,
				ymax=60,
				ytick={0,10,20,30,40,50,60},
				xtick={0,20,40,60,80,100,120},
				width=0.55\hsize, 
				height=0.4\hsize, 
				xlabel={CVD solution size},
				ylabel={2CVD solution size},
				cycle multiindex* list = {only marks \nextlist mark=x}, 
			] 
		\nextgroupplot[title={Bio33}]
			\addplot+[discard if={cvd__k}{0}] table[col sep=comma,y={2cvd__k}, x={cvd__k}] {clusters33.csv};%
			\addplot[color=black,domain=\minValueRun:\maxValueRun,samples=4] {x};
			\addplot[dashed,color=black!75,domain=\minValueRun:\maxValueRun,samples=4] {0.5*x};
			\addplot[dashed,color=black!75,domain=\minValueRun:\maxValueRun,samples=4] {0.25*x};

		\nextgroupplot[title={Bio50}]
			\addplot+[discard if={cvd__k}{0}] table[col sep=comma,y={2cvd__k}, x={cvd__k}] {clusters50.csv};%
			\addplot[color=black,domain=\minValueRun:\maxValueRun,samples=4] {x};
			\addplot[dashed,color=black!75,domain=\minValueRun:\maxValueRun,samples=4] {0.5*x};
			\addplot[dashed,color=black!75,domain=\minValueRun:\maxValueRun,samples=4] {0.25*x};
		\end{groupplot}
	\end{tikzpicture} 
	
	\begin{tikzpicture}
		\begin{groupplot}[ 
				group style={ 
					group name=my plots, 
					group size=2 by 2, 
					ylabels at=edge left, 
					yticklabels at=edge left, 
					horizontal sep=4pt
				}, 
				grid,
				xmax=25,
				ymax=12,
				width=0.55\hsize, 
				height=0.4\hsize, 
				xlabel={CVD clusters},
				ylabel={2CVD clusters},
				cycle multiindex* list = {only marks \nextlist mark=x}, 
			] 
		\nextgroupplot
			\addplot+[discard if={cvd__k}{0}] table[col sep=comma,y={2cvd__clusters}, x={cvd__clusters}] {clusters33.csv};%
			\addplot[color=black,domain=\minValueRun:\maxValueRun,samples=4] {x};
			\addplot[dashed,color=black!75,domain=\minValueRun:\maxValueRun,samples=4] {0.5*x};
			\addplot[dashed,color=black!75,domain=\minValueRun:\maxValueRun,samples=4] {0.25*x};

		\nextgroupplot
			\addplot+[discard if={cvd__k}{0}] table[col sep=comma,y={2cvd__clusters}, x={cvd__clusters}] {clusters50.csv};%
			\addplot[color=black,domain=\minValueRun:\maxValueRun,samples=4] {x};
			\addplot[dashed,color=black!75,domain=\minValueRun:\maxValueRun,samples=4] {0.5*x};
			\addplot[dashed,color=black!75,domain=\minValueRun:\maxValueRun,samples=4] {0.25*x};
		\end{groupplot}
	\end{tikzpicture} 
	\caption{Comparison of the solution size (top row) and average cluster size for \clustervertexdelete{} ($\hat{=}{}$\textsc{1-Club Cluster Vertex Deletion}) and \twoclubvertexdelete{} for two datasets (left: Bio33, right: Bio55). The solid line is~$y=x$, and the dashed ones are~$y=0.5 x$ and~$y=0.25 x$}
	\label{fig:cvd_vs_2cvd}
\end{figure}
For Bio33 there is a much stronger correlation between these two values than for Bio50. 
For Bio33 the CVD solution size is around 2--4 times larger than the 2CVD solution size, but on many Bio50 instances the CVD solution size can be very large while the 2CVD solution size is below five.

We next compare the number of clusters created by both problems, as shown in the bottom row of \cref{fig:cvd_vs_2cvd}.
As expected, the 2CVD solution creates on most instances much less clusters (while deleting fewer vertices).
Note that all solvers we employ compute minimum-size solutions where the number of clusters is not optimized.
Thus, if there are multiple optimal solution that create a different number of clusters, then we have no control which optimal solution is picked.
We believe that this issue causes two instances (one Bio33 and one Bio50 instance) having a smaller number of clusters when using the CVD solution (see the two points above the solid line in \cref{fig:cvd_vs_2cvd}).

Summarizing, using \twoclubvertexdelete{} rather than \clustervertexdelete{} results in fewer deletions and fewer clusters (which are thus of larger size).

\section{Conclusion}\label{sec:conclusion}

We investigated the problem of modifying graphs into 2-club cluster graphs. 
We have shown that \twoclubedit{} is W[2]-hard for the parameter solution size~$k$. %
Furthermore, we developed and engineered a competitive branch\&bound algorithm algorithm for the fixed-parameter tractable \twoclubvertexdelete{} problem. 
On the theoretical side, we left open, however, whether our ``no-poly-kernel'' result for \twoclubvertexdelete{} parameterized by solution size transfers to the closely related \textsc{2-Club Cluster Edge Deletion} problem, a further open problem from the 
literature~\cite{CDFG20,LZZ12}.
Moreover, it would be interesting to see whether our results also generalize to using~$s$-clubs with~$s \geq 3$.
For other 2-club related graph modification problems to be studied one could consider overlapping clusters~\cite{FGKNU11} or use stricter 2-club models such as well-connected 2-clubs~\cite{Kom+19}. 
Limiting the number of local manipulations~\cite{KU12} is another restriction worthwhile investigations.
On the empirical and algorithm engineering side, note that %
while our solver showed strong performance when working with biological data,
preliminary experiments showed that this is less so when attacking social network data. The reasons for this remained open.

\bibliographystyle{abbrvnat-nodoi}
\bibliography{extracted}

\end{document}